\title{\LARGE \bf
Time-Discounted Incremental Input/Output-to-State Stability
}
\author{Sven Kn{\"u}fer and Matthias A. M{\"u}ller%
\thanks{\vspace{0.0cm}}%
\thanks{S. Kn{\"u}fer is with Robert Bosch GmbH,
        Driver Assistance, 70469 Stuttgart, Germany.
        {\tt\small knuefer@gmx.de}.}%
\thanks{M. A. M{\"u}ller is with Leibniz University Hannover,
        Institute of Automatic Control, 30167 Hannover, Germany.
        {\tt\small mueller@irt.uni-hannover.de}.}%
\thanks{This work was supported by the German Research Foundation under Grant MU3929-2/1, project number: 426459964.}%
}
\theoremstyle{plain}%
\newtheorem{thm}{Theorem}
\newtheorem{lem}[thm]{Lemma}
\newtheorem{prop}[thm]{Proposition}
\newtheorem{cor}[thm]{Corollary}
\theoremstyle{definition}
\newtheorem{defn}{Definition}
\theoremstyle{remark}
\newtheorem{rem}{Remark}
\renewenvironment{proof}{\begin{proofIEEE}}{\end{proofIEEE}}
\newcommand{\Xcon}{\mathbb{X}}
\newcommand{\Xfun}{\mathbb{X}}
\newcommand{\Yfun}{\mathbb{Y}}
\newcommand{\Ufun}{\mathbb{U}}
\newcommand{\Vcon}{\mathbb{V}}
\newcommand{\Vfun}{\mathbb{V}}
\newcommand{\Wcon}{\mathbb{W}}
\newcommand{\Wfun}{\mathbb{W}}
\newcommand{\met}{|}
\newcommand{\BL}{B_{L}}
\renewcommand{\gamma}{\beta_{w}}
\renewcommand{\delta}{\beta_{v}}
\renewcommand{\epsilon}{\beta_{u}}
\renewcommand{\varphi}{\beta_{y}}
\begin{document}

\maketitle
\thispagestyle{empty}
\pagestyle{empty}

\begin{abstract}

The present work provides two alternatives to formulate time-discounted incremental input/output-to-state stability (i-IOSS) as a suitable detectability notion for general nonlinear systems with non-additive disturbances.
Both formulations are related to existing i-IOSS notions which result as special cases.
Previous results that provide a sufficient Lyapunov-function condition for i-IOSS and that prove i-IOSS to be necessary for the existence of robustly stable full-order state observers are generalized to the presented time-discounted i-IOSS definition for general nonlinear systems.
For linear systems, explicit i-IOSS bounds are presented.

\end{abstract}

\section{INTRODUCTION}
\label{sec:introduction}

The concept of incremental input/output-to-state stability (i-IOSS) has become an effectively used nonlinear detectability notion in recent years.
Especially in the context of moving horizon estimation (MHE), it is successfully applied to achieve robust stability results~\cite{Allan_Rawlings_ACC2019,Hu_arXiv_2017,Hu_et_al_CDC15,Ji_et_al_MHE,knuefer2018,Muller_Aut_2017,Rawlings_Ji_JPC12}.
Generalizing the notion of incremental input-to-state stability (i-ISS)~\cite{Angeli_TAC02,AngeliSontagWang_TAC2000}, the concept of i-IOSS has been introduced in~\cite{Sontag_OSS} and shown to be necessary for the existence of full-order state observers.
While~\cite{Angeli_TAC02,AngeliSontagWang_TAC2000,Sontag_OSS} consider continuous-time systems with nonlinear process disturbances and additive output disturbances, the above mentioned MHE results apply i-IOSS for such systems in discrete-time.
In~\cite{Allan_Rawlings_ACC2019}, an according Lyapunov characterizations for i-IOSS in discrete-time is presented.
However, to the authors' best knowledge, no i-IOSS formulation for non-additive disturbances, i.e., for the most general nonlinear case has been formulated so far.
At the first glance, addressing non-additive disturbances might appear to be a notation-intensive but straight-forward generalization.
Though, our approach promises insight at least in the following two ways.
Firstly, when comparing two system trajectories, it allows to distinguish between different output trajectories and different measurement noise signals.
The same holds true when considering control inputs versus process disturbances.
Secondly, the special case of linear detectable systems reveals that the naturally resulting i-IOSS estimate separates inputs from process disturbances and outputs from output disturbances.
Hence, an according separation is also justified in the general nonlinear case.
An additional observation from the i-IOSS estimate for linear systems is time-discounting.
Intuitively, information content derived from previous inputs, disturbances, and outputs diminishes with proceeding time.
This intuition is only implicitly covered by the classical i-IOSS formulation which motivates time-discounted terms as observed in~\cite{Allan_Rawlings_TWCCC2018,Allan_Rawlings_ACC2019,knuefer2018}.
In this work, we consequently present a time-discounted i-IOSS formulation for general nonlinear systems with not necessarily additive disturbances, see Section~\ref{sec:nonlineardectectability}.
Moreover, we generalize existing Lyapunov techniques to give a sufficient condition for i-IOSS.
Section~\ref{sec:rgasobserver} furthermore shows that time-discounted i-IOSS is necessary for the existence of full-order state observers, which extends the results of~\cite{Sontag_OSS}.
Then, especially motivated by the common application in the field of MHE, we introduce an alternative sum-based i-IOSS formulation in Section~\ref{sec:summability} and relate it to the previous results.
Finally, Section~\ref{sec:lineariIOSS} addresses the special case of linear detectable systems.
We formally prove that detectability is equivalent to i-IOSS in the linear case and derive explicit i-IOSS estimates.
While this equivalence is a well known result, the authors are not aware of any discussion of such explicit i-IOSS bounds for linear systems in the literature.
Note that MHE results that provide guaranteed convergence rates without a-priori knowledge of the disturbances are based on i-IOSS estimates - even in the linear-quadratic case.
This motivates to address such explicit i-IOSS bounds even for linear systems.
Moreover, the linear case serves to evaluate conservativeness of the proof techniques and of the established estimates in the nonlinear case.

\section{SETUP}
\label{sec:setup}

Let
${\Xfun}$,
${\Ufun}$,
${\Yfun}$,
${0 \in \Wfun}$,
and
${0 \in \Vfun}$
be metric spaces with corresponding metrices ${\met{} \cdot, \cdot \met{}}$ and abbreviate ${\met{} \cdot, 0\met{}}$ by ${\met{} \cdot \met{}}$.
In the following, we consider nonlinear discrete-time system dynamics of the form
\begin{align}
	x(t+1) &= f( x(t), u(t), w(t) ),
  \label{eq:sys}
	\\
	y(t) &= h( x(t), u(t), v(t) ),
  \label{eq:out}
\end{align}
where ${t \in \mathbb{N} \ (\ni 0)}$ and where ${f: \Xfun \times \Ufun \times \Wfun \rightarrow \Xfun}$, ${h: \Xfun \times \Ufun \times \Vfun \rightarrow \Yfun}$ are some nonlinear functions constituting the nominal system dynamics and the output model, respectively.
In \eqref{eq:sys}-\eqref{eq:out}, ${u : \mathbb{N} \rightarrow \Ufun}$ gives the known control input and ${w : \mathbb{N} \rightarrow \Wcon}$ represents an a priori unknown process disturbance while ${v : \mathbb{N} \rightarrow \Vcon}$ defines an a priori unknown measurement noise.
An initial condition ${x_0 \in \Xcon}$, an input ${u}$, and a process disturbance ${w}$ lead to a state trajectory ${x: \mathbb{N} \rightarrow \Xcon}$ under \eqref{eq:sys}.
Finally, the measurement noise ${v}$ generates an output trajectory ${y: \mathbb{N} \rightarrow \Yfun}$ according to~\eqref{eq:out}.
Such a tuple ${ \{ x, u, w, v, y \} }$ satisfying \eqref{eq:sys}-\eqref{eq:out} for all ${t \in \mathbb{N}}$ is called a solution of system~\eqref{eq:sys}-\eqref{eq:out} in the following.

\begin{rem}
\label{rem:modeldiscussion}
Note that the above system formulation~\eqref{eq:sys}-\eqref{eq:out} aims to cover the most general system class of nonlinear inputs and outputs and nonlinear process and output disturbances.
As discussed in the introduction, this is in contrast to the existing formulations in the literature where additive disturbances, especially for the output model, are considered or either inputs or process disturbances are present exclusively.
Using additive output disturbances does leave degrees of freedom from a disturbance model perspective but constitutes a structural simplification.
In the context of existing i-IOSS results, the usage of additive output disturbances allows to represent differences of nominal outputs as disturbance differences.
For the sake of a clear separation between output signals and output disturbances (and according differences in their domains), the general nonlinear formulation is introduced in~\eqref{eq:out}.
From a controller point of view, the separation between process disturbances and inputs perfectly makes sense: the former is unknown and not manipulable while the latter is known and can be chosen.
To investigate detectability independent of the controller, the classic approach is either to take an input as given and include the input's influence directly into the system model eliminating the according function argument, or to take the input as unknown such that there is no use to distinguish between inputs and process disturbances.
In the present work, we however firstly aim for a formulation that separates the impact of both signal chains towards state and secondly the direct feed-through of the input ${u}$ towards the output ${y}$ constitutes a structural difference between the inputs and the process disturbances.
\end{rem}

In the context of nonlinear stability results, comparison functions according to the following definition are classically used.
For a collection of several results on such comparison functions see for instance~\cite{Kellett_MathConSigSys14}.

\begin{defn}[Comparison Functions]
\label{defn:comparisonfun}
A function ${\alpha: [0, \infty) \rightarrow [0, \infty)}$ is called ${\mathcal{K}}$-\emph{function}, i.e., ${\alpha \in \mathcal{K}}$, if ${\alpha}$ is continuous, strictly increasing, and ${\alpha(0)=0}$.
A function ${\alpha: \mathbb{N} \rightarrow [0, \infty)}$ is called ${\mathcal{L}}$-\emph{function}, i.e., ${\alpha \in \mathcal{L}}$, if ${\alpha}$ is non-increasing and ${\lim_{t\rightarrow\infty} \alpha(t)=0}$.
A function ${\beta: [0, \infty) \times \mathbb{N} \rightarrow [0, \infty)}$ is called ${\mathcal{KL}}$-\emph{function}, i.e., ${\beta \in \mathcal{KL}}$, if ${\beta(\cdot,t)} \in \mathcal{K}$ for each fixed ${t \in \mathbb{N}}$, and $\beta(r,\cdot) \in \mathcal{L}$ for each fixed ${r \in [0, \infty)}$.
\end{defn}

In the following, the abbreviation ${\kappa^{t}(r)}$ is used for the ${t}$-fold composition of any ${\kappa \in \mathcal{K}}$, i.e.,
${\kappa^{0}(r) := r}$ and ${\kappa^{t+1}(r) := \kappa(\kappa^{t}(r))}$
for ${t \in \mathbb{N}}$.

\section{NONLINEAR DETECTABLITY}
\label{sec:nonlineardectectability}

While the term \emph{detectability} is clearly defined for linear systems, many notions of detectability exist in the context of nonlinear systems.
Two main reasons might be that detectability of a certain state is in general not equivalent to detectability of arbitrary trajectories and that detectability is not only an issue of the output-to-state relation but also of the inputs' influence in the nonlinear context.
In previous works such as~\cite{Hu_arXiv_2017,Hu_et_al_CDC15,Ji_et_al_MHE,knuefer2018,Muller_Aut_2017,Rawlings_Ji_JPC12}, these observations are formulated in the detectability notion of incremental input/output-to-state stability (i-IOSS).
Adapted to the general system~\eqref{eq:sys}-\eqref{eq:out} with nonlinear disturbances and strengthened by explicit time-discounting, this work investigates the following notion of i-IOSS.

\begin{defn}[time-discounted i-IOSS]
\label{defn:DiscountingPredictoriIOSS}
System~\eqref{eq:sys}-\eqref{eq:out} is time-discounted incrementally input/output-to-state stable (i-IOSS) if there exist ${\beta, \gamma, \delta, \epsilon, \varphi \in \mathcal{KL}}$ such that, for any two solutions ${ \{ x, u, w, v, y \} }$ and ${ \{ \chi, \upsilon, \omega, \nu, \zeta \} }$ of~\eqref{eq:sys}-\eqref{eq:out}, the difference between the two trajectories remains bounded according to %
\begin{align}
\label{eq:defn:DiscountingPredictoriIOSS}
	|x(t), \chi(t)| &\leq \max \{ \beta( |x_{0}, \chi_{0}|, t ),
	\\
	& \qquad
  \notag
  \max _{1 \leq \tau \leq t} \{ \gamma( |w(t - \tau), \omega(t - \tau)|, \tau ),
	\\
	& \qquad \qquad \quad
  \notag
  \delta( |v(t - \tau), \nu(t - \tau)|, \tau ),
	\\
	& \qquad \qquad \quad
  \notag
  \epsilon( |u(t - \tau), \upsilon(t - \tau)|, \tau ),
	\\
	& \qquad \qquad \quad
  \notag
  \varphi( |y(t - \tau), \zeta(t - \tau)|, \tau ) \} \}
\end{align}
for all ${t \in \mathbb{N}}$%
.
\end{defn}

The above definition especially extends i-IOSS towards general nonlinear models with non-affine disturbances.
For classical i-IOSS, the difference between two arbitrary state trajectories is bounded in terms of the (i) their initial conditions, (ii) their inputs, (iii) their outputs.
In order to incorporate general disturbances, Definition~\ref{defn:DiscountingPredictoriIOSS} additionally introduces explicit terms for (iv) the process disturbances and (v) the output disturbances.
While for instance for output models of the form ${h(x, u, v) = \bar{h}(x) + v}$ differences of the additive output disturbances and differences of the outputs can be pulled together, the nonlinear setup of the present work requires to handle the influence of these differences separately in~\eqref{eq:defn:DiscountingPredictoriIOSS}.
An according statement applies for process disturbances that directly manipulate the input, i.e., ${f(x, u, w) = \bar{f}(x, u + w)}$.
While for the above two examples (${h(x, u, v) = \bar{h}(x) + v}$ and ${f(x, u, w) = \bar{f}(x, u + w)}$), the according bounding terms will be identical, i.e., ${\gamma = \epsilon}$ and ${\delta = \varphi}$, the terms in~\eqref{eq:defn:DiscountingPredictoriIOSS} in general allow to investigate the disturbances' influence independent of the inputs and outputs.
Especially the below comparison with i-IOSS results for linear systems in Section~\ref{sec:lineariIOSS} reveals that an i-IOSS estimate that explicitly depends on all five terms (i)-(v) might actually be the naturally expected from.

As classical, non-time-discounted i-IOSS only provides bounds for the disturbances' influences with respect to the maximum norm over time, asymptotic convergence of two trajectories can only be inferred indirectly via the decay rate of the initial error term, see, e.g.,~\cite{Hu_et_al_CDC15,Muller_Aut_2017}.
The special case of exponentially time-discounted i-IOSS has been considered in~\cite{knuefer2018}, and a suggestion to introduce explicit time-discounting has also been made in~\cite[Remarks~19 and~35]{Allan_Rawlings_TWCCC2018} and~\cite[Remark~6]{Allan_Rawlings_ACC2019}.
Definition 2 above gives a general, non-exponential version of time discounting, which also opens the way to Section~\ref{sec:summability}, in which a sum-based i-IOSS formulation is considered.

Concepts to show i-IOSS are investigated in~\cite{Allan_Rawlings_ACC2019,Angeli_TAC02,Bayer_et_al_ECC13,Ji_et_al_MHE,Sontag_OSS}.
These concepts can be extended to the above notion of time-discounted i-IOSS as stated in the following theorem.

\begin{thm}[i-IOSS Lyapunov Condition]
\label{thm:AllanRawlingsLyap}
Suppose there exist ${\mathcal{K}}$-functions ${\alpha_{1}, \alpha_{2}, \alpha_{3}, \rho_{w}, \rho_{v}, \rho_{u}, \rho_{y}}$ and a continuous function ${V: \mathbb{X} \times \mathbb{X} \rightarrow \mathbb{R}}$ such that
\begin{align}
  \label{eq:thm:AllanRawlingsLyap:Lyab}
  \alpha_{1}(| \bar{x}, \bar{\chi} |)
  &\leq
  V(\bar{x}, \bar{\chi})
  \leq
  \alpha_{2}(| \bar{x}, \bar{\chi} |)
\end{align}
is satisfied for all ${\bar{x}, \bar{\chi} \in \mathbb{X}}$ and such that
\begin{align}
  \label{eq:thm:AllanRawlingsLyap:Decrease}
  V(f(\bar{x}, \bar{u}, \bar{w}), f(\bar{\chi}, \bar{\upsilon}, \bar{\omega}))
  &\leq
  V(\bar{x}, \bar{\chi})
  - \alpha_{3}(V(\bar{x}, \bar{\chi})) %
  \\
  & \hspace{-2cm}
  \notag
  + \rho_{w}(| \bar{w}, \bar{\omega} |)
  + \rho_{v}(| \bar{v}, \bar{\nu} |)
  + \rho_{u}(| \bar{u}, \bar{\upsilon} |)
  \\
  & \hspace{-2cm}
  \notag
  + \rho_{y}(| h(\bar{x}, \bar{u}, \bar{v}), h(\bar{\chi}, \bar{\upsilon}, \bar{\nu}) |)
\end{align}
holds for all ${\bar{x}, \bar{\chi} \in \mathbb{X}}$, ${\bar{u}, \bar{\upsilon} \in \mathbb{U}}$, ${\bar{w}, \bar{\omega} \in \mathbb{W}}$, ${\bar{v}, \bar{\nu} \in \mathbb{V}}$.
Then the system~\eqref{eq:sys}-\eqref{eq:out} is time-discounted i-IOSS according to Definition~\ref{defn:DiscountingPredictoriIOSS}.
\end{thm}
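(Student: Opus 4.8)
The plan is to evaluate the dissipation inequality \eqref{eq:thm:AllanRawlingsLyap:Decrease} along two arbitrary solutions, reduce the resulting scalar difference inequality to a \emph{max-form} recursion, iterate it, and read off the five $\mathcal{KL}$-bounds; this parallels the non-discounted Lyapunov characterization of~\cite{Allan_Rawlings_ACC2019}. First, fix two solutions $\{x,u,w,v,y\}$ and $\{\chi,\upsilon,\omega,\nu,\zeta\}$ of \eqref{eq:sys}-\eqref{eq:out} and set $V(t) := V(x(t),\chi(t))$. Evaluating \eqref{eq:thm:AllanRawlingsLyap:Decrease} at $\bar x = x(t)$, $\bar\chi = \chi(t)$, $\bar u = u(t)$, $\bar\upsilon = \upsilon(t)$, $\bar w = w(t)$, $\bar\omega = \omega(t)$, $\bar v = v(t)$, $\bar\nu = \nu(t)$, and using that along solutions $f(x(t),u(t),w(t)) = x(t+1)$, $f(\chi(t),\upsilon(t),\omega(t)) = \chi(t+1)$, $h(x(t),u(t),v(t)) = y(t)$, and $h(\chi(t),\upsilon(t),\nu(t)) = \zeta(t)$, yields the scalar inequality $V(t+1) \le V(t) - \alpha_3(V(t)) + d(t)$, where $d(t)$ is the sum of the four cross terms $\rho_w(|w(t),\omega(t)|)$, $\rho_v(|v(t),\nu(t)|)$, $\rho_u(|u(t),\upsilon(t)|)$, $\rho_y(|y(t),\zeta(t)|)$.

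Next, I would turn this \emph{sum-form} dissipation inequality into a \emph{max-form} recursion
\[
  V(t+1) \le \max\{\, \eta(V(t)),\ \sigma(d(t)) \,\}
\]
by a standard case distinction: if $d(t) \le \tfrac12\alpha_3(V(t))$ then $V(t+1) \le V(t) - \tfrac12\alpha_3(V(t))$, while otherwise $\alpha_3(V(t)) < 2 d(t)$ and hence $V(t+1) \le V(t) + d(t) < \alpha_3^{-1}(2 d(t)) + d(t) =: \sigma(d(t))$ with $\sigma \in \mathcal{K}$. Here $\eta \in \mathcal{K}$ is taken to be any majorant of $r \mapsto \max\{ r - \tfrac12\alpha_3(r),\ 0 \}$ satisfying $\eta(r) < r$ for $r > 0$ — for instance $\eta := \tfrac12(\phi + \mathrm{id})$, where $\phi(r) := \max_{0 \le s \le r} \max\{ s - \tfrac12\alpha_3(s),\ 0 \}$ is the continuous, non-decreasing increasing envelope, so that $\eta \ge \phi \ge \max\{\mathrm{id} - \tfrac12\alpha_3,\, 0\}$ and $\eta < \mathrm{id}$. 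Since $\eta(r) < r$ for $r > 0$, the iterates satisfy $\eta^t(r) \downarrow 0$ as $t \to \infty$, and $(r,t) \mapsto \eta^t(r)$ is of class $\mathcal{KL}$.

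Iterating the max-form recursion from $0$ to $t$ and using monotonicity of $\eta$ gives
\[
  V(t) \le \max\Big\{\, \eta^t(V(0)),\ \max_{1 \le \tau \le t} \eta^{\tau-1}\big( \sigma(d(t-\tau)) \big) \,\Big\}
\]
after re-indexing via $\tau = t - k$. Splitting each four-term sum with $\sigma(a_1 + \cdots + a_4) \le \max_i \sigma(4 a_i)$ and sandwiching via \eqref{eq:thm:AllanRawlingsLyap:Lyab} (so $\alpha_1(|x(t),\chi(t)|) \le V(t)$ and $V(0) \le \alpha_2(|x_0,\chi_0|)$) then produces \eqref{eq:defn:DiscountingPredictoriIOSS} with $\beta(r,t) := \alpha_1^{-1}(\eta^t(\alpha_2(r)))$, $\gamma(r,\tau) := \alpha_1^{-1}(\eta^{\tau-1}(\sigma(4\rho_w(r))))$, and $\delta, \epsilon, \varphi$ obtained the same way with $\rho_v, \rho_u, \rho_y$ in place of $\rho_w$ (all extended to $\tau = 0$ in the obvious way); each of these five functions is readily verified to be of class $\mathcal{KL}$.

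No genuinely new idea is needed; the effort lies in the comparison-function bookkeeping. The points to watch are the construction of $\eta$, the requirement that $\alpha_3^{-1}$ be applied only within its range (so that the case distinction is meaningful for every $d(t) \ge 0$), and the well-definedness of $\alpha_1^{-1}$ on the relevant range. I would dispatch these with the standard class-$\mathcal{K}$/$\mathcal{K}_\infty$ constructions collected in~\cite{Kellett_MathConSigSys14} — either restricting attention to (or arguing without loss of generality that) $\alpha_1 \in \mathcal{K}_\infty$ and replacing $\alpha_3$ by a convenient minorant, or carrying along the minor modifications required when the comparison functions are bounded. Matching the discount index precisely (the exponent $\tau-1$ for $\tau \ge 1$) is then immediate from the re-indexing of the iteration.
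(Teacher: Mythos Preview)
Your proposal is correct and follows essentially the same route as the paper's proof: convert the additive dissipation inequality into a max-form one-step contraction via the standard case split, build the $\mathcal{K}$-majorant $\eta$ (the paper's $\kappa$) by averaging the identity with the increasing envelope of $\mathrm{id}-\alpha_3$, iterate, and sandwich with $\alpha_1,\alpha_2$. The only cosmetic difference is that the paper splits the four disturbance channels \emph{before} passing to max-form (obtaining separate gains $\phi_n(r)=4\alpha_2(\alpha_3^{-1}(8\rho_n(r)))+4\rho_n(r)$), whereas you bundle them into a single $d(t)$ and split afterwards via $\sigma(4\rho_n(\cdot))$; the resulting $\mathcal{KL}$-bounds differ only in constants.
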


\begin{proof}
The proof is a straight-forward generalization of the proof of Proposition~5 in~\cite{Allan_Rawlings_ACC2019}.
Using the construction in~\cite[Theorem~B.15]{Rawlings_Mayne_Diehl_MPC17}, we define
\begin{align}
  \label{eq:thm:AllanRawlingsLyap:construction}
  \kappa(r) := \frac{1}{2} r + \frac{1}{2} \max_{r' \in [0, r]} \{ r' - \alpha_{3}(r') \} %
\end{align}
such that ${\kappa \in \mathcal{K}}$ satisfies ${r > \kappa(r) > r - \alpha_{3}(r)}$ for all ${r \in (0, \infty)}$. %
By standard arguments it is shown that~\eqref{eq:thm:AllanRawlingsLyap:Decrease} implies
\begin{align}
  \label{eq:thm:AllanRawlingsLyap:contraction}
  V(f(\bar{x}, \bar{u}, \bar{w}), f(\bar{\chi}, \bar{\upsilon}, \bar{\omega}))
  &\leq
  \max \{
  \kappa(V(\bar{x}, \bar{\chi})),
  \\
  &
  \hspace{-2cm}
  \notag
  \phi_{w}(| \bar{w}, \bar{\omega} |),
  \phi_{v}(| \bar{v}, \bar{\nu} |),
  \phi_{u}(| \bar{u}, \bar{\upsilon} |),
  \\
  &
  \hspace{-2cm}
  \notag
  \phi_{y}(| h(\bar{x}, \bar{u}, \bar{v}), h(\bar{\chi}, \bar{\upsilon}, \bar{\nu}) |)
  \}
\end{align}
with ${\phi_{n}(r) := 4 \alpha_2(\alpha_3^{-1}(8 \rho_{n}(r))) + 4 \rho_{n}(r)}$ for ${n \in \{w, v, u, y\}}$.
This contraction leads to the required estimate~\eqref{eq:defn:DiscountingPredictoriIOSS}, cf. (8) in~\cite{Allan_Rawlings_ACC2019}, with ${\beta(r, t) := \alpha_{1}^{-1} \circ \kappa^{t} \circ \alpha_{2}}$ and ${\beta_{n}(\cdot, t) := \alpha_{1}^{-1} \circ \kappa^{t} \circ \phi_{n}}$. %
\end{proof}

In~\cite{Allan_Rawlings_TWCCC2018}, ${V}$ of Theorem~\ref{thm:AllanRawlingsLyap} is called an i-IOSS Lyapunov function.
Note that ${V}$ directly takes two arguments, i.e., it measures the distance between two states.
The decrease function ${\alpha_{3}}$ guarantees a distinct decrease of ${V}$ provided that the input, output, and disturbance differences of the two compared trajectories are small.
Due to the structure of the Lyapunov condition~\eqref{eq:thm:AllanRawlingsLyap:Decrease}, the decrease function ${\alpha_{3}}$ defines a common decrease rate for all terms in the desired estimate~\eqref{eq:defn:DiscountingPredictoriIOSS}.
Note that in general the different terms might have different decrease rates as an alternative proof technique for the special case of linear detectable systems reveals in Corollary~\ref{cor:LINiIOSSDirectSum} of Section~\ref{sec:lineariIOSS} below.

\section{RGAS OBSERVER}
\label{sec:rgasobserver}

This section investigates to which extent the time-discounted i-IOSS formulation of Definition~\ref{defn:DiscountingPredictoriIOSS} allows to preserve the classical results which relate existence of full-order state observers and the i-IOSS condition and which are formulated in~\cite{Sontag_OSS} for continuous-time systems with additive disturbances in a non-time-discounted way.
The following definition is an according generalization of~\cite[Definition~20]{Sontag_OSS}.

\begin{defn}[RGAS Observer]
\label{defn:DiscountingPredictorObserver}
A robustly globally asymptotically stable (full-order state) observer for system~\eqref{eq:sys}-\eqref{eq:out} is a system defined by
\begin{align}
\label{eq:defn:DiscountingPredictorObserver}
  \tilde{x}(t+1) = g(\tilde{x}(t), \tilde{u}(t), \tilde{w}(t), \tilde{v}(t), \tilde{y}(t))
\end{align}
with ${g : \mathbb{X} \times \mathbb{U} \times \mathbb{W} \times \mathbb{V} \times \mathbb{Y} \rightarrow \mathbb{X}}$ and ${\tilde{x} : \mathbb{N} \rightarrow \mathbb{X}}$ such that there exist ${\beta, \gamma, \delta, \epsilon, \varphi \in \mathcal{KL}}$ satisfying
\begin{align}
\label{eq:defn:DiscountingPredictorObserveriIOSS}
	|x(t), \tilde{x}(t)| &\leq \max \{ \beta( |x_{0}, \tilde{x}_{0}|, t ),
	\\
	& \qquad
  \notag
  \max _{1 \leq \tau \leq t} \{ \gamma( |w(t - \tau), \tilde{w}(t - \tau)|, \tau ),
	\\
	& \qquad \qquad \quad
  \notag
  \delta( |v(t - \tau), \tilde{v}(t - \tau)|, \tau ),
	\\
	& \qquad \qquad \quad
  \notag
  \epsilon( |u(t - \tau), \tilde{u}(t - \tau)|, \tau ),
	\\
	& \qquad \qquad \quad
  \notag
  \varphi( |y(t - \tau), \tilde{y}(t - \tau)|, \tau ) \} \}
\end{align}
for all ${t \in \mathbb{N}}$, %
all solutions ${ \{ x, u, w, v, y \} }$ of~\eqref{eq:sys}-\eqref{eq:out} and all solutions ${ \{ \tilde{x}, \tilde{u}, \tilde{w}, \tilde{v}, \tilde{y} \} }$ of~\eqref{eq:defn:DiscountingPredictorObserver}.
\end{defn}

Note that ${y}$ is the output of~\eqref{eq:sys}-\eqref{eq:out} while ${\tilde{y}}$ is an input of~\eqref{eq:defn:DiscountingPredictorObserver}.
Precisely, ${\tilde{y} \neq y}$ covers the case in which the disturbance of the output model~\eqref{eq:out} does not properly represent the actual disturbance affecting the channel \emph{output of the to-be-observed system towards observer input}.
Usually, one expects that the observer input ${\tilde{y}}$ equals ${h(x, u, v)}$, which might motivate to require~\eqref{eq:defn:DiscountingPredictorObserveriIOSS} only for solutions ${ \{ x, u, w, v, y \} }$ and ${ \{ \tilde{x}, \tilde{u}, \tilde{w}, \tilde{v}, \tilde{y} \} }$ that satisfy such a coupling condition.
However, relaxing Definition~\ref{defn:DiscountingPredictorObserver} in the described way leaves us with \emph{no statement at all} in arbitrary small neighborhoods of the coupling condition ${\tilde{y} = h(x, u, v)}$.
While for additive output disturbances ${v}$ as considered in~\cite{Allan_Rawlings_ACC2019,Sontag_OSS}, robustness against violations of the constraint ${\tilde{y} = h(x, u, v)}$ is implicitly represented by the output disturbance gain, the context of general nonlinear output disturbances requires to consider the case ${\tilde{y} \neq y}$ explicitly.
Accordingly, the case ${\tilde{u} \neq u}$ covers the neighborhood of the expected equivalence condition of the system's input ${u}$ and the observer's input ${\tilde{u}}$, i.e., of a potentially imprecise process disturbance model~\eqref{eq:sys}.
Finally, the inputs ${\tilde{w}}$ and ${\tilde{v}}$ allow to incorporate a priori guesses of the process disturbance and the measurement noise.
These observer inputs could for instance represent disturbance or parameter estimates that are gained by an additional external estimator.
However, classical implementations of~\eqref{eq:defn:DiscountingPredictorObserver} usually choose constant inputs ${\tilde{w}}$ and ${\tilde{v}}$ that represent nominal values.
While it appears to be a strong requirement to expect stability with respect to arbitrary a priori guesses, this is in fact crucial to show that the existence of an RGAS observer according to Definition~\ref{defn:DiscountingPredictorObserver} also implies the time-discounted i-IOSS property, see Proposition~\ref{prop:Observer2iIOSS} at the end of this section.

For the classical case of accurate disturbance models and zero a priori guesses, i.e., ${u = \tilde{u}}$, ${y = \tilde{y}}$, ${\tilde{v} = 0}$, and ${\tilde{w} = 0}$, a straight-forward consequence of~\eqref{eq:defn:DiscountingPredictorObserveriIOSS} is input-to-state stability of the observer-error with respect to the disturbances ${w}$ and ${v}$ in the sense of
\begin{align}
\label{eq:rgas:lyap}
	|x(t), \tilde{x}(t)| &\leq \max \{ \beta( |x_{0}, \tilde{x}_{0}|, t ),
  \\
	& %
  \max _{1 \leq \tau \leq t} \{ \gamma( |w(t - \tau)|, 0 ),
  \notag
  \delta( |v(t - \tau)|, 0 ) \} \}.
\end{align}
Moreover, the estimation error caused by a faulty initial estimation or by specific disturbances ${w(\tau)}$ or ${v(\tau)}$ decays asymptotically in~\eqref{eq:defn:DiscountingPredictorObserveriIOSS}.
Hence, the even stronger implication
\begin{align}
\label{eq:rgas:asymptotic}
  \lim_{t \rightarrow \infty} |w(t)| = 0
  \wedge
  \lim_{t \rightarrow \infty} |v(t)| &= 0
  \\
  \Rightarrow
  \qquad \qquad
  \lim_{t \rightarrow \infty} |x(t), \tilde{x}(t)| &= 0
\end{align}
results for the above sketched classical case of accurate disturbance models and zero a priori guesses.

Suppose system~\eqref{eq:sys}-\eqref{eq:out} contains asymptotically converging unobservable modes, then ${\beta}$ must bound the decrease rate of the slowest of such modes from above.
This idea illustrates that estimators which are based on the time-discounted i-IOSS condition do in general not allow to correct initial estimation errors any faster than the convergence rate of the slowest unobservable mode.
In the same lines, process disturbance and output noise introduce estimation errors.
As they might only effect parts of the state their corresponding decrease rates might be faster than the slowest unobservable mode while slower decrease rates are not to be expected.

The following two results are extensions of Lemma~21 respectively Proposition~23 in~\cite{Sontag_OSS}.

\begin{lem}[Output Injection Form]
\label{lem:OutputInjection}
Any RGAS observer according to Definition~\ref{defn:DiscountingPredictorObserver} must have the \emph{output injection form}, i.e., satisfy the identity
\begin{align}
\label{eq:lem:OutputInjection:identity}
  f(x_0, u_0, w_0) = g(x_0, u_0, w_0, v_0, h(x_0, u_0, v_0))
\end{align}
for all ${x_0 \in \mathbb{X}}$, ${u_0 \in \mathbb{U}}$, ${w_0 \in \mathbb{W}}$, ${v_0 \in \mathbb{V}}$.
\end{lem}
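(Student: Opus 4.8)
The plan is to invoke the i-IOSS estimate~\eqref{eq:defn:DiscountingPredictorObserveriIOSS} along a carefully chosen pair of solutions for which \emph{every} difference appearing on its right-hand side vanishes, so that the estimate collapses to ${x(t) = \tilde{x}(t)}$ for all ${t}$; reading this identity at ${t = 1}$ then produces the output injection form~\eqref{eq:lem:OutputInjection:identity}. The crucial observation is that Definition~\ref{defn:DiscountingPredictorObserver} demands~\eqref{eq:defn:DiscountingPredictorObserveriIOSS} for all system solutions of~\eqref{eq:sys}-\eqref{eq:out} \emph{and} all observer solutions of~\eqref{eq:defn:DiscountingPredictorObserver} simultaneously, in particular for \emph{matched} pairs with equal initial states and identical input, process-disturbance, measurement-noise, and output signals.

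Concretely, I would fix arbitrary ${x_0 \in \mathbb{X}}$, ${u_0 \in \mathbb{U}}$, ${w_0 \in \mathbb{W}}$, ${v_0 \in \mathbb{V}}$ and take the constant signals ${u(t) := u_0}$, ${w(t) := w_0}$, ${v(t) := v_0}$ for all ${t \in \mathbb{N}}$. These generate, via~\eqref{eq:sys}-\eqref{eq:out}, a solution ${\{x, u, w, v, y\}}$ with ${x(0) = x_0}$ and ${y(t) = h(x(t), u(t), v(t))}$. For the observer~\eqref{eq:defn:DiscountingPredictorObserver} I would choose the initial condition ${\tilde{x}_0 := x_0}$ and the inputs ${\tilde{u} := u}$, ${\tilde{w} := w}$, ${\tilde{v} := v}$, ${\tilde{y} := y}$; iterating ${g}$ yields an admissible observer solution ${\{\tilde{x}, \tilde{u}, \tilde{w}, \tilde{v}, \tilde{y}\}}$. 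Since every argument ${|x_0, \tilde{x}_0|}$, ${|w(t-\tau), \tilde{w}(t-\tau)|}$, ${|v(t-\tau), \tilde{v}(t-\tau)|}$, ${|u(t-\tau), \tilde{u}(t-\tau)|}$, and ${|y(t-\tau), \tilde{y}(t-\tau)|}$ equals ${0}$, and since ${\beta(\cdot,t), \gamma(\cdot,\tau), \delta(\cdot,\tau), \epsilon(\cdot,\tau), \varphi(\cdot,\tau) \in \mathcal{K}}$ all vanish at ${0}$, estimate~\eqref{eq:defn:DiscountingPredictorObserveriIOSS} forces ${|x(t), \tilde{x}(t)| \leq 0}$, hence ${x(t) = \tilde{x}(t)}$ for all ${t \in \mathbb{N}}$. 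Evaluating at ${t = 1}$ gives ${f(x_0, u_0, w_0) = x(1) = \tilde{x}(1) = g(x_0, u_0, w_0, v_0, h(x_0, u_0, v_0))}$, which is~\eqref{eq:lem:OutputInjection:identity}; arbitrariness of ${x_0, u_0, w_0, v_0}$ completes the argument.

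There is no serious technical obstacle here; the argument is short and the comparison functions do all the work once the matched pair is in place. The one point that genuinely matters — and that motivates the precise wording of Definition~\ref{defn:DiscountingPredictorObserver} discussed in the text following it — is that the matched pair with ${\tilde{y} = y}$, ${\tilde{u} = u}$, ${\tilde{w} = w}$, ${\tilde{v} = v}$ is admitted as a combination of a system solution and an observer solution, so that~\eqref{eq:defn:DiscountingPredictorObserveriIOSS} may legitimately be evaluated along it. Had the definition instead imposed these couplings only approximately, one would need continuity of ${f}$, ${g}$, and ${h}$ together with a limiting argument to recover~\eqref{eq:lem:OutputInjection:identity}; in the present formulation this detour is unnecessary.
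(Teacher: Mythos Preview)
Your proof is correct and follows essentially the same approach as the paper: match the observer's initial state and all inputs to those of the system so that every term on the right of~\eqref{eq:defn:DiscountingPredictorObserveriIOSS} vanishes, then read off the identity at ${t=1}$. The paper's version is marginally more economical in that it only fixes the time-${0}$ data (which is all that enters the estimate at ${t=1}$) rather than extending to constant signals for all ${t}$, but this is a cosmetic difference, not a substantive one.
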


\begin{proof}
Consider arbitrary ${x_0 \in \mathbb{X}}$, ${u_0 \in \mathbb{U}}$, ${w_0 \in \mathbb{W}}$, ${v_0 \in \mathbb{V}}$ and choose
${\tilde{x}_0 = x_0}$, ${\tilde{u}_0 = u_0}$, ${\tilde{w}_0 = w_0}$, ${\tilde{v}_0 = v_0}$, and ${\tilde{y}_0 = h(x_0, u_0, v_0)}$.
Then Definition~\ref{defn:DiscountingPredictorObserver} requires
\begin{align}
	|x(1), \tilde{x}(1)| &\leq \max \{ \beta( |x_{0}, \tilde{x}_{0}|, 1 ),
	\\
	& \qquad
  \notag
  \max \{ \gamma( |w_0, \tilde{w}_0|, 1 ),
  \delta( |v_0, \tilde{v}_0|, 1 ),
	\\
	& \qquad \qquad \ \,
  \notag
  \varphi( |h(x_0, u_0, v_0), h(x_0, u_0, v_0)|, 1 ),
	\\
	& \qquad \qquad \ \,
  \notag
  \epsilon( |u_0, \tilde{u}_0|, 1 ) \} \}
\end{align}
via~\eqref{eq:defn:DiscountingPredictorObserveriIOSS} for ${t = 1}$.
Hence, ${|x(1), \tilde{x}(1)| = 0}$ or equivalently ${x(1) = \tilde{x}(1)}$ holds and consequently we obtain~\eqref{eq:lem:OutputInjection:identity}.
\end{proof}

\begin{prop}
\label{prop:Observer2iIOSS}
If an observer according to Definition~\ref{defn:DiscountingPredictorObserver} exists for system~\eqref{eq:sys}-\eqref{eq:out}, the system is time-discounted i-IOSS according to Definition~\ref{defn:DiscountingPredictoriIOSS}.
\end{prop}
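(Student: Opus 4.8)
The plan is to exploit the \emph{output injection form} established in Lemma~\ref{lem:OutputInjection}, which forces every trajectory of the original system to be, at the same time, a trajectory of the observer. Concretely, fix two arbitrary solutions $\{x, u, w, v, y\}$ and $\{\chi, \upsilon, \omega, \nu, \zeta\}$ of~\eqref{eq:sys}-\eqref{eq:out}. First I would show that the tuple $\{\tilde{x}, \tilde{u}, \tilde{w}, \tilde{v}, \tilde{y}\} := \{\chi, \upsilon, \omega, \nu, \zeta\}$ is a solution of the observer dynamics~\eqref{eq:defn:DiscountingPredictorObserver}: since $\zeta(t) = h(\chi(t), \upsilon(t), \nu(t))$ and $\chi(t+1) = f(\chi(t), \upsilon(t), \omega(t))$, the identity~\eqref{eq:lem:OutputInjection:identity} applied with $x_0 = \chi(t)$, $u_0 = \upsilon(t)$, $w_0 = \omega(t)$, $v_0 = \nu(t)$ yields $\chi(t+1) = g(\chi(t), \upsilon(t), \omega(t), \nu(t), \zeta(t))$ for every $t \in \mathbb{N}$, which is precisely~\eqref{eq:defn:DiscountingPredictorObserver}.

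Then the estimate~\eqref{eq:defn:DiscountingPredictorObserveriIOSS} granted by Definition~\ref{defn:DiscountingPredictorObserver}, invoked for the system solution $\{x, u, w, v, y\}$ together with the observer solution $\{\chi, \upsilon, \omega, \nu, \zeta\}$, reads --- after substituting $\tilde{x} = \chi$, $\tilde{u} = \upsilon$, $\tilde{w} = \omega$, $\tilde{v} = \nu$, $\tilde{y} = \zeta$ and noting $|x_0,\tilde{x}_0| = |x_0,\chi_0|$ --- exactly as the i-IOSS estimate~\eqref{eq:defn:DiscountingPredictoriIOSS}; in particular the same comparison functions $\beta, \gamma, \delta, \epsilon, \varphi$ carry over. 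Since the two solutions were arbitrary, this establishes time-discounted i-IOSS according to Definition~\ref{defn:DiscountingPredictoriIOSS}.

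The only delicate point is the first step, and it is delicate precisely because it relies on the observer estimate~\eqref{eq:defn:DiscountingPredictorObserveriIOSS} being required for \emph{all} observer inputs $\tilde{u}, \tilde{w}, \tilde{v}, \tilde{y}$ rather than merely those satisfying a coupling condition with the observed system; this is what makes Lemma~\ref{lem:OutputInjection} available and hence permits ``running'' the observer along the second system trajectory. Once that is in place, no comparison-function machinery or $\mathcal{KL}$ estimates need to be re-derived: the i-IOSS bound is inherited verbatim from the observer bound, mirroring the structure of the corresponding argument for Proposition~23 in~\cite{Sontag_OSS}.
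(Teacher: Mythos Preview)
Your proposal is correct and follows essentially the same approach as the paper: both exploit Lemma~\ref{lem:OutputInjection} to identify the second system trajectory with an observer trajectory (you verify this by feeding $\{\chi,\upsilon,\omega,\nu,\zeta\}$ into~\eqref{eq:defn:DiscountingPredictorObserver}, the paper does it by running the observer with the feedback $\tilde{y}(t):=h(\tilde{x}(t),\tilde{u}(t),\tilde{v}(t))$ and observing it reduces to~\eqref{eq:sys}), and then read off~\eqref{eq:defn:DiscountingPredictoriIOSS} directly from~\eqref{eq:defn:DiscountingPredictorObserveriIOSS}. Your remark about the necessity of allowing arbitrary observer inputs is also precisely the point the paper stresses in its discussion following Definition~\ref{defn:DiscountingPredictorObserver}.
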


\begin{proof}
Consider arbitrary ${x_0, \tilde{x}_0 \in \mathbb{X}}$, ${u, \tilde{u} : \mathbb{N} \rightarrow \mathbb{U}}$, ${w, \tilde{w} : \mathbb{N} \rightarrow \mathbb{W}}$, ${v, \tilde{v} : \mathbb{N} \rightarrow \mathbb{V}}$ resulting in a state trajectory ${x : \mathbb{N} \rightarrow \mathbb{X}}$ according to system~\eqref{eq:sys}-\eqref{eq:out}.
Apply the feedback ${\tilde{y}(t) := h(\tilde{x}(t), \tilde{u}(t), \tilde{v}(t))}$ for all ${t \in \mathbb{N}}$ to the observer dynamics~\eqref{eq:defn:DiscountingPredictorObserver}.
Consequently, the identity~\eqref{eq:lem:OutputInjection:identity} applies such that the observer state follows the dynamics
\begin{align}
\label{eq:prop:Observer2iIOSS:observer}
  \tilde{x}(t+1) = f(\tilde{x}(t), \tilde{u}(t), \tilde{w}(t))
\end{align}
for all ${t \in \mathbb{N}}$.
Hence, \eqref{eq:defn:DiscountingPredictorObserveriIOSS} directly gives the desired estimate~\eqref{eq:defn:DiscountingPredictoriIOSS}.
\end{proof}

The above two results, Lemma~\ref{lem:OutputInjection} and Proposition~\ref{prop:Observer2iIOSS} illustrate that the time-discounted i-IOSS formulation according to Definition~\ref{defn:DiscountingPredictoriIOSS} gives a natural generalization towards general non-linear systems, i.e., especially with non-additive output disturbances.
While the generalization of the i-IOSS estimate is rather straight-forward, Proposition~\ref{prop:Observer2iIOSS} shows that the classical condition for an RGAS observer needs to be extended towards robustness against arbitrary a-priori disturbance guesses ${\tilde{w}}$ and ${\tilde{v}}$, see Definition~\ref{defn:DiscountingPredictorObserver} and its discussion, in order to preserve the existing result that i-IOSS is necessary for the existence of a full-order state observer for general nonlinear systems.

\begin{rem}
Note that also~\cite[Proposition~6.1]{Angeli_TAC02} can be generalized in a similar way such that the following statement applies as well:
If the dynamics~\eqref{eq:defn:DiscountingPredictorObserver} are time-discounted incrementally input-to-state stable (i-ISS) with respect to all four inputs and if they satisfy the output-injection form~\eqref{eq:lem:OutputInjection:identity}, then they define an RGAS observer for~\eqref{eq:sys}-\eqref{eq:out} according to Definition~\ref{defn:DiscountingPredictorObserver}.
\end{rem}

\section{SUM-BASED i-IOSS FORMULATION}
\label{sec:summability}

This section investigates a sufficient condition that allows to replace the max-terms in~\eqref{eq:defn:DiscountingPredictoriIOSS} of Definition~\ref{defn:DiscountingPredictoriIOSS} by sums.
Our main motivation is that such a formulation naturally results for linear systems, see Remark~\ref{rem:thm:LINiIOSSLyap:sum} and Corollary~\ref{cor:LINiIOSSDirectSum} in Section~\ref{sec:lineariIOSS} below.
Hence, the question arises under which condition also nonlinear systems satisfy an according i-IOSS estimate.
Many MHE-results such as~\cite{Hu_arXiv_2017,Hu_et_al_CDC15,Ji_et_al_MHE,knuefer2018,Muller_Aut_2017,Rawlings_Ji_JPC12} that make use of the i-IOSS condition put the i-IOSS estimate and the utilized MHE cost function into relation in order to derive RGAS guarantees for the constructed estimators.
As classical MHE cost functions sum up cost terms over certain horizons, this gives an additional motivation to look for a sum-based i-IOSS formulation.

In order to obtain a well-defined sum-based upper bound, the utilized ${\mathcal{KL}}$-functions need to be summable according to the following definition, for which a sufficient condition is introduced in the proposition below.

\begin{defn}
\label{defn:summable}
A ${\mathcal{KL}}$-function ${\beta}$ is called \emph{summable} if there exists a bounding ${\mathcal{K}}$-function $\sigma$ such that ${\sum _{\tau = 0} ^{\infty} \beta( r, \tau ) \leq \sigma(r)}$ holds for all ${r \in \mathbb{R}}$.
\end{defn}

\begin{prop}
\label{prop:inducesummable}
Consider a ${\mathcal{K}}$-function ${\alpha}$.
If there exist ${\bar{r} \in (0, \infty)}$ and ${K \in (0, 1)}$ such that
\begin{align}
\label{eq:prop:inducesummable:MinIncrease}
  \alpha(r) \geq K r
\end{align}
holds for all ${r \in [0, \bar{r}]}$, then there exists a ${\mathcal{K}}$-function ${\kappa}$ that satisfies
\begin{align}
\label{eq:prop:inducesummable:KappaLowerBound}
  \kappa(r) \geq r - \alpha(r)
\end{align}
for all ${r \in [0, \infty)}$ and such that ${\beta(r, t) := \kappa^{t}(r)}$ is a summable ${\mathcal{KL}}$-function.
\end{prop}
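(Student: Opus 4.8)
The plan is to reuse the $\kappa$-construction from the proof of Theorem~\ref{thm:AllanRawlingsLyap}: set
\[
  \kappa(r) := \frac{1}{2} r + \frac{1}{2} \max_{r' \in [0, r]} \{ r' - \alpha(r') \},
\]
verify the required lower bound and the $\mathcal{KL}$-property exactly as there, and then exploit the local linear lower bound~\eqref{eq:prop:inducesummable:MinIncrease} to prove summability of the iterates $\kappa^{t}$.

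First I would record the elementary properties. Since $r' \mapsto r' - \alpha(r')$ is continuous, its running maximum is continuous, non-decreasing, and equals $0$ at $r = 0$; hence $\kappa$ is continuous with $\kappa(0) = 0$, and it is strictly increasing because of the term $\tfrac{1}{2} r$, so $\kappa \in \mathcal{K}$. Evaluating the maximum at $r' = r$ gives $\kappa(r) \geq r - \tfrac{1}{2}\alpha(r) \geq r - \alpha(r)$, which is~\eqref{eq:prop:inducesummable:KappaLowerBound} (with trivial equality at $r = 0$); and since every $r' - \alpha(r')$ with $r' \in [0,r]$ stays strictly below $r$ when $r > 0$, one obtains $\kappa(r) < r$ for $r > 0$. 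From $\kappa(s) \leq s$ it follows that $t \mapsto \kappa^{t}(r)$ is non-increasing, from $\kappa(s) < s$ for $s > 0$ that its limit, being a fixed point of $\kappa$, equals $0$, and from $\kappa^{t} \in \mathcal{K}$ that $\beta(\cdot, t) \in \mathcal{K}$; hence $\beta(r, t) := \kappa^{t}(r)$ is a $\mathcal{KL}$-function.

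The heart of the argument is summability. For $r' \leq r \leq \bar{r}$, \eqref{eq:prop:inducesummable:MinIncrease} gives $r' - \alpha(r') \leq (1 - K) r' \leq (1 - K) r$, so $\kappa(r) \leq \lambda r$ on $[0, \bar{r}]$ with $\lambda := 1 - K/2 \in (0, 1)$; in particular $\kappa$ maps $[0, \bar{r}]$ into itself. Since $\kappa^{t}(r) \to 0$, for every $r$ there is a finite first index $T(r)$ with $\kappa^{T(r)}(r) \leq \bar{r}$, and then $\kappa^{T(r)+j}(r) \leq \lambda^{j} \bar{r}$ for all $j \geq 0$. Splitting the series at $T(r)$ and bounding each of its first $T(r)$ terms by $r$ yields $\sum_{t=0}^{\infty} \kappa^{t}(r) \leq T(r)\, r + \bar{r}/(1 - \lambda) < \infty$ (and simply $\leq r/(1 - \lambda)$ if $r \leq \bar{r}$). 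Therefore $\sigma(r) := \sum_{t=0}^{\infty} \kappa^{t}(r)$ is finite and non-negative, vanishes at $r = 0$, and is strictly increasing because each $\kappa^{t}$ is.

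The only point I expect to require genuine care is the continuity of $\sigma$. On any bounded interval $[0, M]$ one has $\kappa^{t}(r) \leq \kappa^{t}(M)$ with $\sum_{t} \kappa^{t}(M) < \infty$ by the estimate above, so a Weierstrass-type comparison gives uniform convergence of the partial sums on $[0, M]$ and hence continuity of $\sigma$ there; as $M$ is arbitrary, $\sigma$ is continuous on $[0, \infty)$, so $\sigma \in \mathcal{K}$. Since $\sigma$ bounds $\sum_{\tau=0}^{\infty} \beta(r, \tau)$ by construction, $\beta$ is summable, which completes the proof. The main obstacle is thus not any single hard estimate but the passage from pointwise finiteness of the geometric-type series to a bona fide $\mathcal{K}$-function bound — the possibly large but finite transient length $T(r)$ for large $r$ (which is only integer-valued, hence discontinuous in $r$) is exactly what forces a compact-interval/uniform-convergence argument rather than a one-line global contraction estimate.
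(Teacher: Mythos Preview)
Your proof is correct and starts from the same $\kappa$-construction and the same contraction bound $\kappa(r)\leq(1-\tfrac{K}{2})r$ on $[0,\bar r]$ as the paper. The route diverges at the summability step. The paper additionally exploits that $\alpha(r)\geq\alpha(\bar r)\geq K\bar r$ for $r\geq\bar r$ to obtain the constant-step decrease $\kappa(r)\leq r-\tfrac{K}{2}\bar r$ outside $[0,\bar r]$; this yields an explicit bound $\bar\tau=\max\{0,\lceil 2(r-\bar r)/(K\bar r)\rceil\}$ on the number of iterations needed to enter $[0,\bar r]$, and summing the resulting arithmetic head and geometric tail produces a closed-form $\sigma$ (linear for $r<\bar r$, quadratic for $r\geq\bar r$), which is then checked to be a $\mathcal{K}$-function. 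You instead define $\sigma$ as the sum itself, use only the qualitative fact $\kappa^{t}(r)\to 0$ to get a finite but unquantified entry time $T(r)$, and recover continuity of $\sigma$ via a Weierstrass-type uniform-convergence argument on compact intervals. Your route is tidier and avoids the paper's ``longer but straight-forward computation''; the paper's route, in exchange, delivers an explicit $\sigma$, which is useful when concrete i-IOSS bounds are the goal (as in the linear-systems discussion later in the paper).
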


\begin{proof}
Defining ${\kappa}$ according to~\eqref{eq:thm:AllanRawlingsLyap:construction} with ${\alpha_{3}}$ replaced by ${\alpha}$, \eqref{eq:prop:inducesummable:KappaLowerBound} is satisfied. 
Due to~\eqref{eq:prop:inducesummable:MinIncrease}, we furthermore obtain
\begin{align}
  \kappa(r)
  \label{eq:prop:inducesummable:kappaboundsmall}
  &\leq ( 1 - \frac{1}{2} K ) r
\end{align}
for all ${r \in [0, \bar{r}]}$ and
\begin{align}
  \kappa(r)
  \label{eq:prop:inducesummable:kappaboundlarge}
  &\leq r - \frac{1}{2} K \bar{r}
\end{align}
for all ${r \in [\bar{r}, \infty)}$.
In order to show boundedness of ${\sum _{\tau = 0} ^{\infty} \kappa^{\tau}( r )}$ for arbitrary ${r \in [0, \infty)}$, we split the sum into summands smaller ${\bar{r}}$ and summands larger ${\bar{r}}$.
For this purpose we observe that
\begin{align}
  \label{eq:prop:inducesummable:kappastarboundsmall}
  \kappa^{\tau}( r ) &\leq r - \frac{1}{2} K \bar{r} \tau
\end{align}
holds for all ${r \in [0, \infty)}$ and all ${\tau \in \mathbb{N}}$ with ${\tau \leq \bar{\tau}}$, ${\bar{\tau} := \max \{ 0, \left\lceil \frac{2(r - \bar{r})}{K \bar{r}} \right\rceil \} }$, due to~\eqref{eq:prop:inducesummable:kappaboundlarge}.
Moreover due to~\eqref{eq:prop:inducesummable:kappaboundsmall} we have
\begin{align}
  \label{eq:prop:inducesummable:kappastarboundlarge}
  \kappa^{\tau}( r ) &\leq ( 1 - \frac{1}{2} K )^{\tau - \bar{\tau}} \min \{ r, \bar{r} \}
\end{align}
for all ${r \in [0, \infty)}$ and all ${\tau \in \mathbb{N}}$ with ${\tau > \bar{\tau}}$.
All in all, we obtain
\begin{align}
  &\sum _{\tau = 0} ^{\infty} \kappa^{\tau}( r )
  \leq
  \label{eq:prop:inducesummable:sumsplit}
  \sum _{\tau = 0} ^{\bar{\tau}} \kappa^{\tau}( r )
  +
  \sum _{\tau = (\bar{\tau} + 1)} ^{\infty} \kappa^{\tau}( r )
  \\
  & \ \leq
  \sigma(r)
  :=
  \begin{cases}
  ( \frac{2}{K} + 1 ) r
  &
  \text{for} \
  r < \bar{r}
  \\
  \frac{1}{ K \bar{r}}
  ( r^2 + \bar{r}^2 )
  +
  \frac{1}{2}
  ( 3 \bar{r} - r )
  &
  \text{for} \
  r \geq \bar{r}
  \end{cases}
\end{align}
using~\eqref{eq:prop:inducesummable:kappastarboundsmall} to bound the finite sum and \eqref{eq:prop:inducesummable:kappastarboundlarge} to bound the last sum in~\eqref{eq:prop:inducesummable:sumsplit}, where ${\sigma}$ follows by a longer but straight-forward computation.
Finally, we observe that ${\sigma}$ is a ${\mathcal{K}}$-function, which concludes the proof.
\end{proof}

As the following theorem details, Theorem~\ref{thm:AllanRawlingsLyap} provides a sum-based i-IOSS formulation if the decrease function ${\alpha_{3}}$ satisfies the local linear lower-bound condition~\eqref{eq:prop:inducesummable:MinIncrease} introduced in Proposition~\ref{prop:inducesummable}.

\begin{thm}
\label{thm:iIOSSsummable}
If the conditions of Theorem~\ref{thm:AllanRawlingsLyap} are met and ${\alpha_{3}}$ satisfies~\eqref{eq:prop:inducesummable:MinIncrease} (i.e. is locally linearly lower-bounded at the origin), then the system is time-discounted i-IOSS according to Definition~\ref{defn:DiscountingPredictoriIOSS} with~\eqref{eq:defn:DiscountingPredictoriIOSS} replaced by
\begin{align}
\label{eq:thm:iIOSSsummable}
	\alpha_{1}(|x(t), \chi(t)|) &\leq \beta^{\Sigma}( |x_{0}, \chi_{0}|, t )
	\\
	& \qquad
  \notag
  + \sum _{\tau = 1} ^{t} ( \gamma^{\Sigma}( |w(t - \tau), \omega(t - \tau)|, \tau )
	\\
	& \qquad \qquad
  \notag
  + \delta^{\Sigma}( |v(t - \tau), \nu(t - \tau)|, \tau )
	\\
	& \qquad \qquad
  \notag
  + \epsilon^{\Sigma}( |u(t - \tau), \upsilon(t - \tau)|, \tau )
	\\
	& \qquad \qquad
  \notag
  + \varphi^{\Sigma}( |y(t - \tau), \zeta(t - \tau)|, \tau ) )
\end{align}
for all ${t \in \mathbb{N}}$ with ${\beta^{\Sigma}, \gamma^{\Sigma}, \delta^{\Sigma}, \epsilon^{\Sigma}, \varphi^{\Sigma} \in \mathcal{KL}}$.
Moreover, all these ${\mathcal{KL}}$-functions are summable.
\end{thm}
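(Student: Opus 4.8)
The plan is to feed the contraction inequality that already comes out of the proof of Theorem~\ref{thm:AllanRawlingsLyap} into the summability machinery of Proposition~\ref{prop:inducesummable}, exploiting the key fact that both rely on the \emph{same} $\mathcal{K}$-function $\kappa$ built from $\alpha_3$ via \eqref{eq:thm:AllanRawlingsLyap:construction}. Since the hypotheses of Theorem~\ref{thm:AllanRawlingsLyap} are assumed, its proof furnishes \eqref{eq:thm:AllanRawlingsLyap:contraction}; evaluated along any two solutions $\{x,u,w,v,y\}$ and $\{\chi,\upsilon,\omega,\nu,\zeta\}$ and written with $V_s := V(x(s),\chi(s))$ and $D_s := \max\{\phi_{w}(|w(s),\omega(s)|),\, \phi_{v}(|v(s),\nu(s)|),\, \phi_{u}(|u(s),\upsilon(s)|),\, \phi_{y}(|y(s),\zeta(s)|)\}$ (here using $y(s)=h(x(s),u(s),v(s))$ and $\zeta(s)=h(\chi(s),\upsilon(s),\nu(s))$), this reads $V_{s+1}\le\max\{\kappa(V_s),\, D_s\}$ for all $s\in\mathbb{N}$. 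Because $\alpha_3$ additionally obeys the local linear lower bound \eqref{eq:prop:inducesummable:MinIncrease}, Proposition~\ref{prop:inducesummable} applied to $\alpha_3$ shows that $(r,t)\mapsto\kappa^{t}(r)$ --- with exactly this $\kappa$ --- is a summable $\mathcal{KL}$-function, with some bounding $\mathcal{K}$-function $\sigma$.

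Next I would iterate the one-step bound: since $\kappa$ is increasing it commutes with $\max$, so a straightforward induction on $t$ gives $V_t\le\max\{\kappa^{t}(V_0),\, \max_{1\le\tau\le t}\kappa^{\tau-1}(D_{t-\tau})\}$, which is precisely the max-based estimate behind Theorem~\ref{thm:AllanRawlingsLyap}. To pass to the sum-based form I then replace the outer maximum by a sum of nonnegative terms, $V_t\le\kappa^{t}(V_0)+\sum_{\tau=1}^{t}\kappa^{\tau-1}(D_{t-\tau})$, and likewise expand each $\kappa^{\tau-1}(D_{t-\tau})$ --- again $D_{t-\tau}$ is a maximum of four $\mathcal{K}$-function values and $\kappa^{\tau-1}$ is increasing --- into $\kappa^{\tau-1}(\phi_{w}(|w(t-\tau),\omega(t-\tau)|))+\kappa^{\tau-1}(\phi_{v}(\cdot))+\kappa^{\tau-1}(\phi_{u}(\cdot))+\kappa^{\tau-1}(\phi_{y}(|y(t-\tau),\zeta(t-\tau)|))$. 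Combining with $\alpha_{1}(|x(t),\chi(t)|)\le V_t$, $V_0\le\alpha_2(|x_0,\chi_0|)$, and monotonicity of $\kappa^{t}$ then yields \eqref{eq:thm:iIOSSsummable} with the choices $\beta^{\Sigma}(r,t):=\kappa^{t}(\alpha_2(r))$, $\gamma^{\Sigma}(r,\tau):=\kappa^{\max\{\tau-1,0\}}(\phi_{w}(r))$, and analogously $\delta^{\Sigma}$, $\epsilon^{\Sigma}$, $\varphi^{\Sigma}$ obtained by replacing $\phi_{w}$ by $\phi_{v}$, $\phi_{u}$, $\phi_{y}$ (the clamp $\max\{\tau-1,0\}$ only defines these gains at $\tau=0$ and is irrelevant to the sum in \eqref{eq:thm:iIOSSsummable}, which runs from $\tau=1$).

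It then remains to verify the two stated regularity properties. That $\beta^{\Sigma},\gamma^{\Sigma},\delta^{\Sigma},\epsilon^{\Sigma},\varphi^{\Sigma}\in\mathcal{KL}$ follows because each is a composition, in the first argument, of the $\mathcal{KL}$-function $(r,t)\mapsto\kappa^{t}(r)$ with a $\mathcal{K}$-function ($\alpha_2$ or one of the $\phi_{n}$), which keeps the $\mathcal{K}$-property for every fixed time and the $\mathcal{L}$-property for every fixed $r$ (the constant-then-decreasing profile of, e.g., $\gamma^{\Sigma}(r,\cdot)$ at $\tau\in\{0,1\}$ is still non-increasing with limit $0$). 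Summability is inherited from Proposition~\ref{prop:inducesummable}: $\sum_{t=0}^{\infty}\beta^{\Sigma}(r,t)=\sum_{t=0}^{\infty}\kappa^{t}(\alpha_2(r))\le\sigma(\alpha_2(r))$ and $\sum_{\tau=0}^{\infty}\gamma^{\Sigma}(r,\tau)\le\phi_{w}(r)+\sum_{j=0}^{\infty}\kappa^{j}(\phi_{w}(r))\le\phi_{w}(r)+\sigma(\phi_{w}(r))$, and these bounds, being compositions and sums of $\mathcal{K}$-functions, are $\mathcal{K}$-functions; the same applies to $\delta^{\Sigma}$, $\epsilon^{\Sigma}$, $\varphi^{\Sigma}$. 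Finally, since the hypotheses of Theorem~\ref{thm:AllanRawlingsLyap} are in force, the original max-based estimate \eqref{eq:defn:DiscountingPredictoriIOSS} holds as well, so the system is time-discounted i-IOSS in the sense of Definition~\ref{defn:DiscountingPredictoriIOSS}, now additionally satisfying the stronger sum-based bound.

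I do not expect a genuine obstacle here --- the argument is essentially a re-bundling of the proof of Theorem~\ref{thm:AllanRawlingsLyap} together with Proposition~\ref{prop:inducesummable}. The points that require care are purely bookkeeping: keeping the exponent $\tau-1$ (rather than $\tau$) correct through the iteration, turning the $\tau$-indexed gains into $\mathcal{KL}$-functions defined on all of $\mathbb{N}$ (handled by the clamp), and --- most importantly --- making sure that the $\kappa$ from \eqref{eq:thm:AllanRawlingsLyap:construction} used inside the proof of Theorem~\ref{thm:AllanRawlingsLyap} is literally the one to which Proposition~\ref{prop:inducesummable} applies, which is exactly why that proposition is phrased in terms of construction \eqref{eq:thm:AllanRawlingsLyap:construction}; this is what lets a single summable $\kappa^{t}$ serve all five gain functions simultaneously.
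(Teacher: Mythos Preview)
Your proposal is correct and follows essentially the same approach as the paper: invoke Proposition~\ref{prop:inducesummable} to get summability of $\kappa^{t}$ (and hence of its compositions with $\alpha_2$ and the $\phi_n$), then pass from the max-based estimate produced in the proof of Theorem~\ref{thm:AllanRawlingsLyap} to the sum-based one by bounding each $\max$ by a sum and reading off $\alpha_1(|x(t),\chi(t)|)\le V_t$. Your version is slightly more explicit in the bookkeeping (the $\kappa^{\tau-1}$ exponent and the clamp at $\tau=0$), but the substance is identical.
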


\begin{proof}
Due to~Proposition~\ref{prop:inducesummable}, ${\beta^{\Sigma}(r, t) := \kappa^{t}(r)}$ with ${\kappa}$ according to~\eqref{eq:thm:AllanRawlingsLyap:construction} is summable.
So is any composition ${\beta^{\Sigma}(\rho(r), t)}$ with ${\rho \in \mathcal{K}}$ arbitrary, i.e., especially ${\beta^{\Sigma}(r, t) := \kappa^{t} \circ \alpha_{2}}$ and ${\beta^{\Sigma}_{n}(\cdot, t) := \kappa^{t} \circ \phi_{n}}$ for ${n \in \{w, v, u, y\}}$.
According to the proof of Theorem~\ref{thm:AllanRawlingsLyap}, the desired estimate~\eqref{eq:thm:iIOSSsummable} consequently results from~\eqref{eq:defn:DiscountingPredictoriIOSS} by applying ${\alpha_{1}}$ on both sides and by replacing all maximizations with summations.
\end{proof}

\begin{rem}
\label{rem:thm:iIOSSsummable:max2sum}
Following the above proof, estimate~\eqref{eq:thm:iIOSSsummable} constitutes a loosened form of estimate~\eqref{eq:defn:DiscountingPredictoriIOSS} as the max-terms are simply replaced by sums, i.e., the upper bound in general increases.
However, if the linear lower bound~\eqref{eq:prop:inducesummable:MinIncrease} holds even globally, the sum-based formulation in~\eqref{eq:thm:iIOSSsummable} turns out to be the more straight-forward and stricter estimate.
In this case, the proof of Theorem~\ref{thm:AllanRawlingsLyap} can make use of ${\kappa(r) := (1 - K)r}$, i.e., a linear contraction function.
Hence the detour via the max-estimate~\eqref{eq:thm:AllanRawlingsLyap:contraction} is no more needed as a direct induction allows to derive an estimate according to~\eqref{eq:thm:iIOSSsummable} with ${\beta^{\Sigma}(r, t) := (1 - K)^{t} \alpha_{2}(r)}$ and ${\beta^{\Sigma}_{n}(r, t) := (1 - K)^{t} \rho_{n}(r)}$ for ${n \in \{w, v, u, y\}}$.
Note that this direct induction allows to arrive at the i-IOSS estimate~\eqref{eq:thm:iIOSSsummable} without sacrificing parts of the decrease function ${\alpha_3}$ to gain the max-estimate~\eqref{eq:thm:AllanRawlingsLyap:contraction}.
Therefore, the decrease rates in~\eqref{eq:thm:iIOSSsummable} are faster while the gains for the disturbances, the inputs, and the outputs are smaller compared to the ones that result for~\eqref{eq:defn:DiscountingPredictoriIOSS} in Theorem~\ref{thm:AllanRawlingsLyap}.
\end{rem}

\begin{rem}
\label{rem:thm:iIOSSsummable:sum2max}
To complete the comparison between the two i-IOSS estimates~\eqref{eq:defn:DiscountingPredictoriIOSS} and~\eqref{eq:thm:iIOSSsummable}, the question arises how to transform the sum-formulation in~\eqref{eq:thm:iIOSSsummable} to the max-formulation in~\eqref{eq:defn:DiscountingPredictoriIOSS}.
Due to the time-discounted formulation in~\eqref{eq:defn:DiscountingPredictoriIOSS}, there is no general answer for arbitrary summable ${\mathcal{KL}}$-functions.
However, for exponentially decreasing terms as discussed in Remark~\ref{rem:thm:iIOSSsummable:max2sum}, the according max-estimate~\eqref{eq:defn:DiscountingPredictoriIOSS} results if the decrease rate is partially sacrificed as for example in
\begin{align}
  \hspace{-0.1cm}
  \label{eq:proof:exposum2max}
  \sum _{\tau = 1} ^{t} \eta^{\tau} \theta_{\tau}
  \leq
  \sum _{\tau = 1} ^{t} \eta^{\frac{\tau}{2}} \max_{1 \leq \tilde{\tau} \leq t} \eta^{\frac{\tilde{\tau}}{2}} \theta_{\tilde{\tau}}
  \leq
  \frac{ \eta^{\frac{1}{2}} }{ 1 - \eta^{\frac{1}{2}} } \max_{1 \leq \tau \leq t} \eta^{\frac{\tau}{2}} \theta_{\tau}
\end{align}
with ${\eta \in [0, 1)}$ and ${\theta_{\tau} \in \mathbb{R}}$.
In particular, the crucial step to derive~\eqref{eq:defn:DiscountingPredictoriIOSS} from~\eqref{eq:thm:iIOSSsummable} is to apply an argument as in~\eqref{eq:proof:exposum2max} to each of the four terms in the sum on the right hand side of~\eqref{eq:thm:iIOSSsummable}. 
Finally observe that the classical, i.e., not time-discounted, i-IOSS estimates such as (3) in \cite{Allan_Rawlings_ACC2019} simply result by utilizing the upper bounding ${K}$-functions according to Definition~\ref{defn:summable}.
\end{rem}

\begin{rem}
\label{rem:SontagWang:max2sum}
Without any difficulties, we see that all arguments of the proofs of Lemma~\ref{lem:OutputInjection} and Proposition~\ref{prop:Observer2iIOSS} also hold true in case all maximizations in Definitions~\ref{defn:DiscountingPredictoriIOSS} and~\ref{defn:DiscountingPredictorObserver} are replaced by sums.
Hence, the fundamental results of Section~\ref{sec:rgasobserver} also apply for the sum-based i-IOSS definition.
\end{rem}

\section{LINEAR SYSTEMS AND i-IOSS}
\label{sec:lineariIOSS}

Now consider linear systems over ${\Xfun = \mathbb{R}^{n_{x}}}$, i.e,
\begin{align}
  \label{eq:syslin}
	f(x, u, w) &= A x + B u + E w,
	\\
  \label{eq:outlin}
	h(x, u, v) &= C x + D u + F v,
\end{align}
with ${\Ufun = \mathbb{R}^{n_{u}}}$, ${\Yfun = \mathbb{R}^{n_{y}}}$, ${\Wfun = \mathbb{R}^{n_{w}}}$, ${\Vfun = \mathbb{R}^{n_{v}}}$, let ${A}$, ${B}$, ${C}$, ${D}$, ${E}$, ${F}$ be matrices of corresponding dimensions, and use the canonical metric ${\met{} \cdot, \cdot \met{} = \| \cdot - \cdot \|}$.
This allows to formulate the following non-surprising result.

\begin{thm}
\label{thm:LINiIOSSLyap}
A linear system is detectable if and only if it is time-discounted i-IOSS according to Definition~\ref{defn:DiscountingPredictoriIOSS}.
\end{thm}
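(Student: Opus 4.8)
The plan is to exploit linearity to pass to the error dynamics and then handle the two implications separately. For two solutions $\{x,u,w,v,y\}$ and $\{\chi,\upsilon,\omega,\nu,\zeta\}$ of \eqref{eq:syslin}--\eqref{eq:outlin}, the error $e(t):=x(t)-\chi(t)$ obeys $e(t+1)=Ae(t)+B\Delta u(t)+E\Delta w(t)$ with output error $y(t)-\zeta(t)=Ce(t)+D\Delta u(t)+F\Delta v(t)$, where $\Delta u:=u-\upsilon$, $\Delta w:=w-\omega$, $\Delta v:=v-\nu$ and $e_0:=x_0-\chi_0$. Hence the i-IOSS estimate \eqref{eq:defn:DiscountingPredictoriIOSS} for a pair of trajectories is equivalent to a single-signal estimate for this linear error system, and I would phrase the whole proof in terms of $e$, $e_0$, $\Delta u$, $\Delta w$, $\Delta v$ and the output error. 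Recall that detectability of $(A,C)$ in discrete time means the unobservable subspace $\mathcal{N}:=\bigcap_{k\ge 0}\ker(CA^k)$ (which is $A$-invariant by Cayley--Hamilton) lies in the open-unit-disk-stable subspace, equivalently $\operatorname{rank}\!\begin{bmatrix}\lambda I-A\\ C\end{bmatrix}=n_x$ for all $|\lambda|\ge 1$, equivalently there is $L$ with $A-LC$ Schur.

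For the implication \emph{i-IOSS $\Rightarrow$ detectable} I would instantiate \eqref{eq:defn:DiscountingPredictoriIOSS} with matched inputs and disturbances, $\Delta u\equiv\Delta w\equiv\Delta v\equiv 0$, and an initial error $e_0\in\mathcal{N}$. Then $e(t)=A^t e_0$ and the output error $CA^t e_0$ vanishes identically, so every term in the outer maximum on the right-hand side of \eqref{eq:defn:DiscountingPredictoriIOSS} except $\beta(\|e_0\|,t)$ is zero (using $\beta_w(0,\cdot)=\beta_v(0,\cdot)=\beta_u(0,\cdot)=\beta_y(0,\cdot)=0$ for $\mathcal{KL}$-functions). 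Therefore $\|A^t e_0\|\le\beta(\|e_0\|,t)\to 0$ as $t\to\infty$. Since this holds for every $e_0$ in the $A$-invariant subspace $\mathcal{N}$, the spectral radius of $A$ restricted to $\mathcal{N}$ is strictly less than one, which is precisely detectability.

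For the converse \emph{detectable $\Rightarrow$ i-IOSS} I would pick $L$ with $A_L:=A-LC$ Schur, so $\|A_L^k\|\le c\lambda^k$ for some $c\ge 1$ and $\lambda\in[0,1)$. Substituting $Ce(t)=\big(y(t)-\zeta(t)\big)-D\Delta u(t)-F\Delta v(t)$ into the $e$-recursion puts it into output-injection form, $e(t+1)=A_Le(t)+L\big(y(t)-\zeta(t)\big)+(B-LD)\Delta u(t)+E\Delta w(t)-LF\Delta v(t)$; unrolling from $e_0$ and taking norms yields the exponentially time-discounted sum bound $\|e(t)\|\le c\lambda^t\|e_0\|+\sum_{\tau=1}^{t}c\lambda^{\tau-1}\big(\|L\|\,\|y(t-\tau)-\zeta(t-\tau)\|+\|B-LD\|\,\|\Delta u(t-\tau)\|+\|E\|\,\|\Delta w(t-\tau)\|+\|LF\|\,\|\Delta v(t-\tau)\|\big)$, which is already an instance of the sum-based i-IOSS of Section~\ref{sec:summability} with explicit bounds. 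To recover Definition~\ref{defn:DiscountingPredictoriIOSS} itself I would apply the sum-to-max step \eqref{eq:proof:exposum2max} to each of the four sums (splitting the geometric weight $\lambda^\tau=\lambda^{\tau/2}\lambda^{\tau/2}$, absorbing the harmless factor $\lambda^{-1}$ or, if $\lambda=0$, using that the sum is finite) and then bound the sum of the five resulting terms by five times their maximum; this gives \eqref{eq:defn:DiscountingPredictoriIOSS} with $\mathcal{KL}$-functions that are linear in magnitude and exponentially decaying, e.g.\ $\beta(r,t)=5c\lambda^t r$ and $\beta_y(r,\tau)=C_y\lambda^{\tau/2}r$, etc. Alternatively, and more in the spirit of reusing the earlier results, one can verify the hypotheses of Theorem~\ref{thm:AllanRawlingsLyap}/Theorem~\ref{thm:iIOSSsummable} with the quadratic candidate $V(\bar x,\bar\chi)=(\bar x-\bar\chi)^\top P(\bar x-\bar\chi)$, where $P\succ 0$ solves the discrete Lyapunov equation $A_L^\top P A_L-P\prec 0$: a Young-inequality expansion of $V$ along the output-injection form produces \eqref{eq:thm:AllanRawlingsLyap:Decrease} with a \emph{linear} $\alpha_3$, which is globally linearly lower-bounded, so Theorem~\ref{thm:iIOSSsummable} together with Remark~\ref{rem:thm:iIOSSsummable:max2sum} immediately delivers both the sum- and the max-form.

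The conceptual content of both directions is short, so I expect the main work to be bookkeeping. In the ``$\Rightarrow$ i-IOSS'' direction the two delicate points are invoking the standard but non-elementary equivalence ``detectable $\Leftrightarrow\exists L$ with $A-LC$ Schur'', and ensuring the comparison functions one writes down are genuine $\mathcal{KL}$-functions --- in particular keeping all coefficients strictly positive when some of $L$, $B-LD$, $E$, $LF$ vanish, and handling the edge case of nilpotent $A_L$ ($\lambda=0$) in the sum-to-max passage. In the ``i-IOSS $\Rightarrow$'' direction the only subtlety is the $A$-invariance of $\mathcal{N}$, which is exactly what legitimizes the restricted-spectral-radius conclusion.
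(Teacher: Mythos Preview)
Your proposal is correct and covers the same ground as the paper, with only a difference in emphasis. For the forward direction (i-IOSS $\Rightarrow$ detectable) you spell out carefully what the paper states in one line: taking matched inputs/disturbances and an initial error in the unobservable subspace forces all terms but $\beta$ to vanish, so $A$ restricted to $\mathcal{N}$ is Schur. For the converse, the paper's \emph{primary} proof of Theorem~\ref{thm:LINiIOSSLyap} goes the Lyapunov route: it takes $V(\bar x,\bar\chi)=\|\bar x-\bar\chi\|_P$ (the \emph{square root} of the quadratic form, so that the triangle inequality directly yields linear $\alpha_3$ and $\rho_n$ without any Young-inequality splitting) and then invokes Theorem~\ref{thm:AllanRawlingsLyap}. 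Your primary route---unrolling the output-injection recursion and converting the exponential sum to a max via~\eqref{eq:proof:exposum2max}---is exactly what the paper does separately in Corollary~\ref{cor:LINiIOSSDirectSum} (there phrased in the $\|\cdot\|_P$-norm so that $\|A_L\|_P<1$ replaces your $c\lambda^k$ bound). Your alternative quadratic-$V$ route is what the paper sketches in Remark~\ref{rem:thm:LINiIOSSLyap:quadratic}. So both arguments are present on both sides; the paper simply foregrounds the Lyapunov verification because it illustrates Theorem~\ref{thm:AllanRawlingsLyap}, whereas your direct unrolling gives the tighter explicit gains and avoids sacrificing any decrease rate to cross-term estimates. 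The edge cases you flag (vanishing coefficients, nilpotent $A_L$) are the same ones the paper handles at the end of the proof of Corollary~\ref{cor:LINiIOSSDirectSum}.
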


\begin{proof}
If a linear system is time-discounted i-IOSS then ${u, v, w, y \equiv 0}$ implies ${x \rightarrow 0}$ for ${t \rightarrow \infty}$ which leads to detectability in the linear case.
For the opposite direction, detectability of a linear system~\eqref{eq:syslin}-\eqref{eq:outlin} guarantees that there exist ${P, Q \in \mathbb{R}^{n_{x} \times n_{x}}}$ positive definite and symmetric and ${L \in \mathbb{R}^{n_{x} \times n_{y}}}$ such that
\begin{align}
  \label{thm:LINiIOSSLyap:discretelyap}
  A_{L}^{\top} P A_{L} = P - Q
\end{align}
with ${A_{L} = A + LC}$.
Consider arbitrary ${\bar{x}, \bar{\chi} \in \mathbb{R}^{n_{x}}}$, ${\bar{u}, \bar{\upsilon} \in \mathbb{R}^{n_{u}}}$, ${\bar{w}, \bar{\omega} \in \mathbb{R}^{n_{w}}}$, ${\bar{v}, \bar{\nu} \in \mathbb{R}^{n_{v}}}$ and define ${x_{\Delta} := \bar{x} - \bar{\chi}}$, ${x_{\Delta}^{+} := f(\bar{x}, \bar{u}, \bar{w}) - f(\bar{\chi}, \bar{\upsilon}, \bar{\omega})}$, ${u_{\Delta} := \bar{u} - \bar{\upsilon}}$, ${w_{\Delta} := \bar{w} - \bar{\omega}}$, ${v_{\Delta} := \bar{v} - \bar{\nu}}$, ${y_{\Delta} := h(\bar{x}, \bar{u}, \bar{v}) - h(\bar{\chi}, \bar{\upsilon}, \bar{\nu})}$ and ${\BL := B + L D}$.
Then we obtain
\begin{align}
  \label{thm:LINiIOSSLyap:outputinjection}
	x_{\Delta}^{+} &= A x_{\Delta} + B u_{\Delta} + E w_{\Delta}
  + L (y_{\Delta} - y_{\Delta})
  \\
  &= A_{L} x_{\Delta} + E w_{\Delta}
  \label{eq:thm:LINiIOSSLyap:differenceinteration}
  + L F v_{\Delta} + \BL u_{\Delta} - L y_{\Delta}.
\end{align}
Applying ${\| \cdot \|_{P} := \| P^{\frac{1}{2}} \cdot \|}$ to both sides and using the triangle-inequality allows to further derive
\begin{align}
	\| x_{\Delta}^{+} \|_{P}
  &=
  \| ( A_{L} x_{\Delta} + E w_{\Delta}
  \\
  & \qquad
  \notag
  + L F v_{\Delta}
  + \BL u_{\Delta} - L y_{\Delta} ) \|_{P}
  \\
  & \leq
  \label{eq:thm:LINiIOSSLyap:contraction}
  \| A_{L} \|_{P} \| x_{\Delta} \|_{P} + \| E w_{\Delta} \|_{P}
  \\
  & \qquad
  \notag
  + \| L F v_{\Delta} \|_{P}
  + \| \BL u_{\Delta} \|_{P} + \| L y_{\Delta} \|_{P}
  \\
  &\leq
  \label{eq:thm:LINiIOSSLyap:Lyap}
  \| x_{\Delta} \|_{P} - \alpha_{3} (\| x_{\Delta} \|_{P}) + \rho_{w} (\| w_{\Delta} \|)
  \\
  & \qquad
  \notag
  + \rho_{v} (\| v_{\Delta} \|)
  + \rho_{u} (\| u_{\Delta} \|) + \rho_{y} (\| y_{\Delta} \|)
\end{align}
with
\begin{align}
  \alpha_{3}(r) &:= ( 1 - \| A_{L} \|_{P} ) \, r
  \label{eq:thm:LINiIOSSLyap:decreasefunction}
  \\
  \label{eq:thm:LINiIOSSLyap:Wgain}
  \rho_{w}(r) &:= \sqrt{ \lambda_{\text{max}}(E^\top P E) } \, r
  \\
  \label{eq:thm:LINiIOSSLyap:Vgain}
  \rho_{v}(r) &:= \sqrt{\lambda_{\text{max}}((L F)^\top P (L F)) } \, r
  \\
  \label{eq:thm:LINiIOSSLyap:Ugain}
  \rho_{u}(r) &:= \sqrt{\lambda_{\text{max}}(\BL^\top P \BL) } \, r
  \\
  \label{eq:thm:LINiIOSSLyap:Ygain}
  \rho_{y}(r) &:= \sqrt{\lambda_{\text{max}}(L^\top P L) } \, r.
\end{align}
Hence, all conditions of Theorem~\ref{thm:AllanRawlingsLyap} are met with ${V(x_1, x_2) := \| x_1 - x_2 \|_{P}}$ which leads to an i-IOSS estimate according to Definition~\ref{defn:DiscountingPredictoriIOSS}.
\end{proof}

\begin{rem}
\label{rem:thm:LINiIOSSLyap:decreaserate}
For the decrease function ${\alpha_3}$ in~\eqref{eq:thm:LINiIOSSLyap:decreasefunction}, we note that
\begin{align}
  \| A_{L} \|_{P} ^2
  &=
  \max _{x \neq 0}
  \frac{x^\top A_{L}^\top P A_{L} x}{x^\top P x}
  \\
  &=
  1 - \min _{x \neq 0} \frac{x^\top Q x}{x^\top P x}
  \\
  &=
  1 - \lambda_{\text{min}}(P^{-\frac{1}{2}} Q P^{-\frac{1}{2}})
\end{align}
due to~\eqref{thm:LINiIOSSLyap:discretelyap}.
Moreover, \eqref{thm:LINiIOSSLyap:discretelyap} implies ${0 \leq I - P^{-\frac{1}{2}} Q P^{-\frac{1}{2}}}$ such that ${\| A_{L} \|_{P} \in [0, 1)}$ holds.
\end{rem}

\begin{rem}
\label{rem:thm:LINiIOSSLyap:sum}
Note that linearity of ${\alpha_{3}}$ allows to directly obtain an i-IOSS estimate according to~\eqref{eq:thm:iIOSSsummable}, see Remark~\ref{rem:thm:iIOSSsummable:max2sum}.
In this case the exponential decrease rate of all ${\mathcal{KL}}$-functions is given by ${\| A_{L} \|_{P}}$.
\end{rem}

\begin{rem}
\label{rem:thm:LINiIOSSLyap:quadratic}
The Lyapunov function defined in the proof of Theorem~\ref{thm:LINiIOSSLyap} is given by the square root of the usually expected quadratic term and hence lacks differentiability at the origin.
Squaring both sides of estimate~\eqref{eq:thm:LINiIOSSLyap:contraction} and upper bounding the resulting cross-terms under sacrificing an arbitrary part of the decrease rate, alternatively allows to derive a quadratic Lyapunov function $V$ at the cost of larger gains ${\rho}$.
\end{rem}

While the fact that detectability and time-discounted i-IOSS are equivalent for linear systems is rather expected, see remarks in~\cite{Angeli_TAC02,Ji_et_al_MHE,Rawlings_Mayne_Diehl_MPC17} for the non-time-discounted case, the proof of Theorem~\ref{thm:LINiIOSSLyap} allows to gain insight into how the i-IOSS gains result.
Firstly, we observe that the decrease function ${\alpha_{3}}$ increases as ${\| A_{L} \|_{P}}$ decreases.
Since, without loss of generality, ${Q}$ and ${P}$ can be uniformly rescaled without touching ${L}$ or ${A_{L}}$, decreasing ${\| A_{L} \|_{P}}$ essentially means decreasing the eigenvalues of ${A_{L}}$.
In order to drive the eigenvalues of ${A_{L}}$ to zero, larger matrices ${L}$ are needed.
Hence, the faster the decrease rate shall be rendered the larger get the resulting gains for the outputs~\eqref{eq:thm:LINiIOSSLyap:Ygain} and the output disturbance~\eqref{eq:thm:LINiIOSSLyap:Vgain}.
Note that this is in perfect accordance with the usual trade-off in Luenberger observer design where faster observer dynamics result in larger output noise gains.
Secondly, we observe that on the one hand the lower bound for the eigenvalues of ${A_{L}}$ is given by the largest eigenvalue of the non-observable modes.
On the other hand, even in the observable case in which ${A_{L}}$ can be rendered nilpotent, ${\| A_{L} \|_{P}}$ will in general be unequal to zero.
Hence, even for observable systems the functions ${\beta(r, t)}$ and ${\beta^{\Sigma}(r, t)}$ in~\eqref{eq:defn:DiscountingPredictoriIOSS} and \eqref{eq:thm:iIOSSsummable} respectively, will in general not vanish\footnote{Note that this observation is in contrast to the statement in~\cite[Remark~14]{Ji_et_al_MHE}.} (especially for ${1 < t < n_{x}}$), which becomes especially evident by~\eqref{eq:cor:LINiIOSSDirectSum:errortrajec} in the proof below.
Finally, the above proof shows how \emph{explicit} decrease rates and gains for the i-IOSS estimate can be derived for arbitrary systems.
For this, we emphasize that the inequalities~\eqref{eq:thm:LINiIOSSLyap:contraction} and~\eqref{eq:thm:LINiIOSSLyap:Lyap} are expected to be rather tight and so are the derived decrease rate and the gains~\eqref{eq:thm:LINiIOSSLyap:Wgain}-\eqref{eq:thm:LINiIOSSLyap:Ygain}.
However, slightly tighter bounds can be obtained by deriving an explicit expression for the difference trajectory ${x_{\Delta}(t) = \bar{x}(t) - \bar{\chi}(t)}$ via an induction of~\eqref{eq:thm:LINiIOSSLyap:differenceinteration}, i.e., by circumventing the use of a Lyapunov-function, as shown in the following result.

\begin{cor}
\label{cor:LINiIOSSDirectSum}
A linear system is detectable if and only if it is time-discounted i-IOSS according to Definition~\ref{defn:DiscountingPredictoriIOSS} with~\eqref{eq:defn:DiscountingPredictoriIOSS} replaced by~\eqref{eq:thm:iIOSSsummable}.
\end{cor}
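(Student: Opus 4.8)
While the claim already follows by combining the Lyapunov construction from the proof of Theorem~\ref{thm:LINiIOSSLyap}---whose decrease function $\alpha_3(r)=(1-\|A_L\|_P)\,r$ is globally linear and hence meets~\eqref{eq:prop:inducesummable:MinIncrease}---with Theorem~\ref{thm:iIOSSsummable}, the plan is to give a \emph{direct} argument that bypasses the max-type contraction~\eqref{eq:thm:AllanRawlingsLyap:contraction} and therefore produces the tighter gains announced above. The ``only if'' direction is exactly as in Theorem~\ref{thm:LINiIOSSLyap}: comparing the trajectory $x$ generated by an $x_0$ in the unobservable subspace (so that $u=w=v\equiv0$ yields $Cx(t)=0$ for all $t$) with the zero trajectory $\chi\equiv0$, $\upsilon=\omega=\nu\equiv0$, for which $\zeta\equiv h(0,0,0)=0=y$, every sum in~\eqref{eq:thm:iIOSSsummable} vanishes and $\alpha_1(\|x(t)\|)\le\beta^{\Sigma}(\|x_0\|,t)\to0$; hence $A^tx_0\to0$ for every such $x_0$, i.e.\ $(A,C)$ is detectable.

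For the ``if'' direction, detectability provides $L$ with $A_L=A+LC$ Schur, and the Lyapunov equation~\eqref{thm:LINiIOSSLyap:discretelyap} provides $P\succ0$ with $\|A_L\|_P\in[0,1)$ by Remark~\ref{rem:thm:LINiIOSSLyap:decreaserate}. Carrying over the difference notation $x_{\Delta},u_{\Delta},w_{\Delta},v_{\Delta},y_{\Delta},\BL$ and the output-injection recursion $x_{\Delta}^{+}=A_Lx_{\Delta}+Ew_{\Delta}+LFv_{\Delta}+\BL u_{\Delta}-Ly_{\Delta}$ from~\eqref{eq:thm:LINiIOSSLyap:differenceinteration}, a routine induction on $t$ gives the closed form
\begin{align}
x_{\Delta}(t) &= A_{L}^{t} x_{\Delta}(0) \notag \\
&\quad + \sum_{\tau=1}^{t} A_{L}^{\tau-1}\bigl( E w_{\Delta}(t-\tau) + L F v_{\Delta}(t-\tau) + \BL u_{\Delta}(t-\tau) - L y_{\Delta}(t-\tau) \bigr) \label{eq:cor:LINiIOSSDirectSum:errortrajec}
\end{align}
for the difference trajectory $x_{\Delta}(t)=\bar{x}(t)-\bar{\chi}(t)$.

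Taking $\|\cdot\|_P$ on both sides of~\eqref{eq:cor:LINiIOSSDirectSum:errortrajec}, using the triangle inequality, the bound $\|A_L^{\tau-1}z\|_P\le\|A_L\|_P^{\tau-1}\|z\|_P$, the channel estimates $\|Ew_{\Delta}\|_P\le\sqrt{\lambda_{\text{max}}(E^{\top}PE)}\,\|w_{\Delta}\|$ and likewise for $LFv_{\Delta}$, $\BL u_{\Delta}$, $Ly_{\Delta}$ (i.e.\ the gains~\eqref{eq:thm:LINiIOSSLyap:Wgain}--\eqref{eq:thm:LINiIOSSLyap:Ygain}), and finally the norm equivalences $\sqrt{\lambda_{\text{min}}(P)}\,\|x_{\Delta}(t)\|\le\|x_{\Delta}(t)\|_P$ and $\|x_{\Delta}(0)\|_P\le\sqrt{\lambda_{\text{max}}(P)}\,\|x_{\Delta}(0)\|$, one arrives at~\eqref{eq:thm:iIOSSsummable} with $\alpha_1(r):=\sqrt{\lambda_{\text{min}}(P)}\,r$, $\beta^{\Sigma}(r,t):=\sqrt{\lambda_{\text{max}}(P)}\,\|A_L\|_P^{t}\,r$, and $\gamma^{\Sigma}(r,\tau):=\|A_L\|_P^{\tau-1}\rho_w(r)$, $\delta^{\Sigma}(r,\tau):=\|A_L\|_P^{\tau-1}\rho_v(r)$, $\epsilon^{\Sigma}(r,\tau):=\|A_L\|_P^{\tau-1}\rho_u(r)$, $\varphi^{\Sigma}(r,\tau):=\|A_L\|_P^{\tau-1}\rho_y(r)$, the $\rho_n$ being the gains of Theorem~\ref{thm:LINiIOSSLyap}. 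Each of these is linear in $r$ and non-increasing with limit $0$ in its time argument, hence $\mathcal{KL}$, and summable with bounding $\mathcal{K}$-function coming from the geometric series $\sum_{\tau\ge1}\|A_L\|_P^{\tau-1}=1/(1-\|A_L\|_P)$, which establishes the claimed estimate.

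I do not expect a genuine obstacle here: the work is essentially bookkeeping. The substantive point---and the reason for isolating the corollary---is that~\eqref{eq:cor:LINiIOSSDirectSum:errortrajec} never routes through the max-contraction~\eqref{eq:thm:AllanRawlingsLyap:contraction}, so it avoids the inflation of the channel gains by $4\alpha_2\circ\alpha_3^{-1}\circ(8\rho_n)+4\rho_n$ and keeps the full geometric decay $\|A_L\|_P^{\tau-1}$ on each of the four channels, yielding a strictly tighter estimate; it also makes transparent, through the leading term $A_L^{t}x_{\Delta}(0)$, that $\beta^{\Sigma}(r,t)$ in general does not vanish for $1<t<n_x$ even when $A_L$ is nilpotent. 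The only care needed is the index shift $k\mapsto\tau=t-k$ in deriving~\eqref{eq:cor:LINiIOSSDirectSum:errortrajec}, the completion of the time-dependent gains to $\tau=0$ (their value at $\tau=1$) so that they are bona fide $\mathcal{KL}$-functions, and the degenerate case $\|A_L\|_P=0$, where the time factor $\|A_L\|_P^{\tau-1}$ is read as $1$ for $\tau\le1$ and $0$ for $\tau\ge2$ and all $\mathcal{KL}$/summability claims persist.
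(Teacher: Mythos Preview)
Your argument is correct and follows the same route as the paper: derive the closed-form difference trajectory~\eqref{eq:cor:LINiIOSSDirectSum:errortrajec} by induction from~\eqref{eq:thm:LINiIOSSLyap:differenceinteration}, take the $P$-norm, apply the triangle inequality, and read off linear $\mathcal{KL}$-bounds. The one substantive deviation is your extra submultiplicativity step $\|A_L^{\tau-1}z\|_P\le\|A_L\|_P^{\tau-1}\|z\|_P$: you thereby force a \emph{common} geometric rate $\|A_L\|_P$ on all four channels, whereas the paper deliberately keeps the products $A_L^{\tau-1}E$, $A_L^{\tau-1}LF$, $A_L^{\tau-1}\BL$, $A_L^{\tau-1}L$ intact in the norm and defines, e.g., $\gamma^{\Sigma}(r,\tau)=\|A_L^{\tau-1}E P^{1/2}\|_P\,r$. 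That choice is (i) slightly tighter than yours and (ii) the very point of isolating the corollary---it is the mechanism by which the paper exhibits that the different channels can carry \emph{different} decay rates, as announced after Theorem~\ref{thm:AllanRawlingsLyap} and in the paragraph preceding the corollary. Your version proves the stated equivalence but forfeits this structural insight. A minor cosmetic point: in the degenerate cases (some channel gain identically zero, or $\|A_L\|_P=0$ making the time factor vanish for $\tau\ge2$) the resulting functions are not strictly in $\mathcal{KL}$; the paper handles this by noting they can be bounded above by arbitrarily small $\mathcal{KL}$-functions, which is cleaner than asserting that ``all $\mathcal{KL}$ claims persist.''
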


\begin{proof}
Following the notation and the arguments in the proof of Theorem~\ref{thm:LINiIOSSLyap}, an induction of~\eqref{eq:thm:LINiIOSSLyap:differenceinteration} results in
\begin{align}
  x_{\Delta}(t) &= A_{L}^{t} x_{\Delta}(0) + \sum _{\tau = 1} ^{t} A_{L}^{\tau-1} [ E w_{\Delta}(t-\tau)
  \label{eq:cor:LINiIOSSDirectSum:errortrajec}
  \\
  & \hspace{-0.5cm}
  \notag
  + L F v_{\Delta}(t-\tau) + \BL u_{\Delta}(t-\tau) - L y_{\Delta}(t-\tau) ]
\end{align}
such that applying ${\| \cdot \|_{P}}$ to both sides, using the triangle-inequality and submultiplicativity gives the desired estimate~\eqref{eq:thm:iIOSSsummable} with
\begin{align}
  \label{eq:cor:LINiIOSSDirectSum:alpha1}
  \alpha_1(r) &:= \sqrt{\lambda_{\text{min}}( P )} \, r
  \\
  \label{eq:cor:LINiIOSSDirectSum:Xgain}
  \beta^{\Sigma}(r, t) &:=
  \| A_{L}^{t} P^{\frac{1}{2}} \|_{P} \, r
  \\
  \label{eq:cor:LINiIOSSDirectSum:Wgain}
  \gamma^{\Sigma}(r, t) &:=
  \| A_{L}^{t-1} E P^{\frac{1}{2}} \|_{P} \, r
  \\
  \label{eq:cor:LINiIOSSDirectSum:Vgain}
  \delta^{\Sigma}(r, t) &:=
  \| A_{L}^{t-1} L F P^{\frac{1}{2}} \|_{P} \, r
  \\
  \label{eq:cor:LINiIOSSDirectSum:Ugain}
  \epsilon^{\Sigma}(r, t) &:=
  \| A_{L}^{t-1} \BL P^{\frac{1}{2}} \|_{P} \, r
  \\
  \label{eq:cor:LINiIOSSDirectSum:Ygain}
  \varphi^{\Sigma}(r, t) &:=
  \| A_{L}^{t-1} L P^{\frac{1}{2}} \|_{P} \, r.
\end{align}
It remains to show that the above functions are of the desired function class (or can be bounded from above accordingly and arbitrarily close).
We observe
(i) that ${\alpha_1 \in \mathcal{K}}$ applies,
(ii) that the functions defined in~\eqref{eq:cor:LINiIOSSDirectSum:Xgain}-\eqref{eq:cor:LINiIOSSDirectSum:Ygain} are continuous,
(iii) that, for each ${t \geq 1}$, they are either zero or strictly increasing with respect to their first argument,
and (iv) that, for increasing ${t \geq 1}$, all ${\| \cdot \|_{P}}$-terms are non-increasing and converge to zero because ${\| A_{L} \|_{P} \in [0, 1)}$ is guaranteed.
(Note that the case ${t = 0}$ is critical only for irregular matrices ${A_{L}}$ and only for the functions defined in~\eqref{eq:cor:LINiIOSSDirectSum:Wgain}-\eqref{eq:cor:LINiIOSSDirectSum:Ygain}, which are never evaluated at ${t = 0}$ in the i-IOSS context.)
Hence, all definitions~\eqref{eq:cor:LINiIOSSDirectSum:Xgain}-\eqref{eq:cor:LINiIOSSDirectSum:Ygain} satisfy the conditions of ${\mathcal{KL}}$-functions (or take the value zero and can be bounded from above by arbitrary small ${\mathcal{KL}}$-functions).
\end{proof}

Finally, re-considering the motivation for introducing distinct terms for ${w}$, ${v}$, ${u}$, and ${y}$ in Definition~\ref{defn:DiscountingPredictoriIOSS}, the proofs of Theorem~\ref{thm:LINiIOSSLyap} and Corollary~\ref{cor:LINiIOSSDirectSum} illustrate how all four terms naturally result for linear systems.
Moreover, the definitions~\eqref{eq:thm:LINiIOSSLyap:Wgain}-\eqref{eq:thm:LINiIOSSLyap:Ygain} and~\eqref{eq:cor:LINiIOSSDirectSum:Wgain}-\eqref{eq:cor:LINiIOSSDirectSum:Ygain} show that for additive disturbances (${E = B}$ and ${F = I}$) and without input feed-through term ${(D = 0)}$, the process disturbance term, i.e., \eqref{eq:cor:LINiIOSSDirectSum:Wgain} or \eqref{eq:thm:LINiIOSSLyap:Wgain}, equals the input term, i.e., \eqref{eq:cor:LINiIOSSDirectSum:Ugain} respectively \eqref{eq:thm:LINiIOSSLyap:Ugain}, and the output disturbance term, i.e., \eqref{eq:cor:LINiIOSSDirectSum:Vgain} or \eqref{eq:thm:LINiIOSSLyap:Vgain}, equals the output term, i.e., \eqref{eq:cor:LINiIOSSDirectSum:Ygain} respectively \eqref{eq:thm:LINiIOSSLyap:Ygain}.

\section{CONCLUSIONS}
\label{sec:conclusions}

The present work provides two time-discounted i-IOSS formulations as a detectability notion for general nonlinear systems with non-additive disturbances.
Our definition covers previous i-IOSS notions for nonlinear systems as special cases as well as the linear case.
Furthermore, we prove that time-discounted i-IOSS can be shown by Lyapunov function techniques and that this property is necessary for the existence of RGAS observers.
The Lyapunov function techniques allow to verify i-IOSS in order to apply recent MHE results in the nonlinear context.
For their application in the linear case, explicit i-IOSS bounds are presented and discussed.

\end{document}